\def\BibTeX{{\rm B\kern-.05em{\sc i\kern-.025em b}\kern-.08em
    T\kern-.1667em\lower.7ex\hbox{E}\kern-.125emX}}
\newcommand*\circled[1]{\tikz[baseline=(char.base)]{
            \node[shape=circle,draw,inner sep=1pt] (char) {#1};}}
\newcommand*\circledr[1]{\tikz[baseline=(char.base)]{
            \node[red,shape=circle,draw,inner sep=1pt] (char) {#1};}}
\newcommand*\circledb[1]{\tikz[baseline=(char.base)]{
            \node[blue,shape=circle,draw,inner sep=1pt] (char) {#1};}}
\newtheorem{theorem}{Theorem}[section]
\newtheorem{lemma}[theorem]{Lemma}
\newcommand{\sdotfill}{\textcolor[rgb]{0.8,0.8,0.8}{\dotfill}} 
\renewcommand{\vec}[1]{\boldsymbol{#1}}
\author[1,2]{Wei Zi}
\author[1]{Qian Li}
\author[1,3, *]{Xiaoming Sun}
\affil[1]{State Key Lab of Processors, Institute of Computing Technology, Chinese Academy of Sciences, Beijing 100190, China}
\affil[2]{University of Chinese Academy of Sciences, Beijing 100049, China}
\affil[3]{CAS Center for Excellence in Topological Quantum Computation, University of Chinese Academy of Sciences}
\affil[*]{sunxiaoming@ict.ac.cn}
\begin{document}

\title{Optimal Synthesis of Multi-Controlled Qudit Gates
}

\maketitle
\begin{abstract}

We propose a linear-size synthesis of the multi-controlled Toffoli gate on qudits with at most one borrowed ancilla. This one ancilla can even be saved when the qudit dimension is odd. Our synthesis leads to improvements in various quantum algorithms implemented on qudits. In particular, we obtain (i) a linear-size and one-clean-ancilla synthesis of multi-controlled qudit gates; (ii) an optimal-size and one-clean-ancilla synthesis of unitaries on qudits; (iii) a near-optimal-size and ancilla-free/one-borrowed-ancilla implementation of classical reversible functions as qudit gates.

\end{abstract}

\section{Introduction}
At present, quantum computing has entered the era of noisy medium-scale quantum (NISQ) systems, where there are inherent limitations on size, depth, and the number of qubits of quantum circuits that can be supported by physical experimental hardware, so the degree of optimization of quantum circuits directly affects the scope of application of quantum computers \cite{preskill2018quantum}. Designing quantum circuits as small as possible, as shallow as possible, and using as few qubits as possible for various computational problems is one of the most important research directions in the field of quantum computing \cite{shende2005synthesis,hu2019efficient,increamenter1,bullock2005asymptotically,soeken2016unlocking,bhattacharjee2019muqut,chu2023scalable,sun2021asymptotically}.

While typical quantum circuits are expressed in terms of qubits (two-level quantum systems), many of the underlying physical systems, e.g. the quantum processors based on photonic systems \cite{lu2020quantum,chi2022programmable}, ion traps \cite{klimov2003qutrit,ringbauer2022universal}, and superconducting devices \cite{blok2021quantum,yurtalan2020implementation}, have much higher natural dimensions, where the proposed type of qubit is actually a restricted subspace of the higher-dimensional systems. By utilizing the other wasted dimensions accessible, mainly because of the increment of the device's information density, we can reduce the resource requirements of quantum circuits \cite{campbell2014enhanced,wang2020qudits}, even exponentially \cite{increamenter1}! So, to extend the frontier of what quantum computers can compute, in particular for those whose underlying physical system (e.g., those based on ion traps \cite{bruzewicz2019trapped}) suffers from poor scalability (i.e., supports only a small number of qubits), it is a promising way to utilize the higher dimensions and work with qudits instead of qubits to implement quantum circuits.

Compared with qubit circuits, there is much less research on the optimization of qudit circuits, and the qudit synthesis of many basic primitives remains to be optimized. The class of multi-controlled qudit gates is an important such primitive, which is widely used in many quantum algorithms, including unitary synthesis \cite{bullock2005asymptotically,toffoli-sy0,toffoli-sy1}, Grover's search algorithm \cite{saha2022asymptotically}, arithmetic operators synthesis \cite{adder1,adder2}, and implementation of classical reversible functions \cite{yeh2022constructing}. There is a standard synthesis of multi-controlled $d$-level qudit gates by using $O(k)$ two-qudit gates, whose two-qudit gate count is optimal, but using as many as $\lceil (k-2)/(d-2) \rceil$ clean ancilla \cite{bullock2005asymptotically,adder2}. Here, $k$ is the number of controls. The synthesis in \cite{moraga2016quantum} is ancilla-free but uses an exponential number of two-qudit gates.
Di and Wei \cite{toffoli-sy1} claimed an ancilla-free synthesis by using $O(k^3)$ two-qudit gates, which significantly improves the synthesis in \cite{moraga2016quantum}. 
Recently, Yeh and van de Wetering \cite{yeh2022constructing} studied how to synthesize multi-controlled qutrit ($3$-level qudit) gates in a fault-tolerant manner, and obtained an ancilla-free synthesis of any multi-controlled Clifford+T unitary on qutrits by using $O(k^{3.585})$ Clifford+T gates. 

In this paper, we propose a one-clean-ancilla synthesis of any $k$-controlled qudit gate by using just $O(k)$ two-qudit gates, which achieves optimality both on size and number of ancilla up to just one ancilla. The core of our synthesis is an $O(k)$-size synthesis of a special $k$-controlled qudit gate, namely the $k$-controlled Toffoli gate using no ancilla when $d$ is odd or just one borrowed ancilla when $d$ is even. In addition, our synthesis of the $k$-controlled Toffoli gate directly leads to an improvement from $O(k^{3.585})$ to $O(k)$ in the Clifford+T gate count of \cite{yeh2022constructing}'s synthesis mentioned above. As applications, our synthesis can be used to improve various quantum algorithms, e.g., synthesis of arithmetic operators \cite{adder1,adder2} and $d$-ary Grover's algorithm \cite{saha2022asymptotically}. In particular, it has the following significant implications.
\vspace{0.25ex}

\noindent\underline{Unitary Synthesis.}  Bullock et al. \cite{bullock2005asymptotically} showed that any unitary on $n$ $d$-level qudits can be synthesized by using $O(d^{2n})$ two-qudit gates, which has been shown to be optimal \cite{bullock2005asymptotically}, but by using $\lceil (n-2)/(d-2) \rceil$ clean ancilla.

	Here, by substituting our improved synthesis of multi-controlled qudit gates, we can significantly reduce the number of clean ancilla from $\lceil (n-2)/(d-2) \rceil$ to just $1$, while keeping the two-qudit gate count still optimal (see Section \ref{sec:unitary}).
	\vspace{0.25ex}

\noindent\underline{Implementation of Classical Reversible Functions.} A $n$-variable $d$-ary classical reversible function is a bijective map $f:\{0,1,\dots,d-1\}^n \to \{0,1,\dots,d-1\}^n$. Classical reversible functions are important because of not only the energy-efficiency of reversible logic but also quantum algorithms involving oracles, which implement classical functions using quantum gates. \cite{yeh2022constructing} obtained an $O(3^nn^{3.585})$-size ancilla-free implementation of any $n$-variable ternary classical reversible function in a fault-tolerant manner. 

Here, for any $d\geq 3$, by substituting our improved synthesis of multiple-controlled Toffoli gate, we obtain an $O(d^nn)$-size implementation of any $n$-variable $d$-ary classical reversible function, where the size is optimal up to a logarithmic factor, and using no ancilla when $d$ is odd and just one borrowed ancilla when $d$ is even (see Section \ref{sec:reversible}). In particular, when $d=3$, our implementation remains fault-tolerant, which answers an open question proposed in \cite{yeh2022constructing}. 
\vspace{1ex}

The rest of this paper is organized as follows. Section~\ref{sec:pre} presents preliminaries. In Section~\ref{sec:main}, we show how to synthesize multiple-controlled qudit gates. In Section~\ref{sec:app}, we apply our synthesis to improve the unitary synthesis and the implementation of classical reversible functions. We conclude this paper in Section~\ref{sec:con}.

\section{Preliminaries}\label{sec:pre}
For a positive integer $d$, let $[\underline{d}]$ denote the set $\{0,1,\cdots,d-1\}$. We will
use boldface type characters, e.g., $\vec{x}$, for vectors. For a vector $\vec{x}=(x_1,\cdots,x_n)$ and $1\leq i<j\leq n$, let $\boldsymbol{x}_{i:j}$ denote the subvector $(x_i, x_{i+1}, \dots, x_j)$.

A qubit is a two-level quantum-mechanical system, or mathematically associated with a two-dimensional Hilbert space.
Similarly, a $d$-level qudit is associated with a $d$-dimensional Hilbert space where $d\geq 3$ is an integer. Let $\ket{0},\ket{1},\cdots,\ket{d-1}$ denote the computational basis of a $d$-level qudit. Any state on a $d$-level qudit can be written
as $\ket{\phi}=\sum_{i=0}^{d-1} \alpha_{i}\ket{i}$ where each $\alpha_i$ is a complex number and $\sum_{i=0}^{d-1}|\alpha_i|^2=1$, or mathematically is a unit vector in the Hilbert space. Throughout the paper, we treat $d$ as a constant, and a $\poly(d)$ factor may be hidden in the big $O$ notation.

Quantum states can be acted on by quantum gates, which are mathematically unitary operators on the Hilbert space. We introduce some quantum gates acting on qudits that we will meet. 
\vspace{0.5ex}

\noindent\underline{Single-qudit gates.} For two distinct $i,j\in[\underline{d}]$, the $X_{ij}$ gate, which acts on a $d$-level qudit, swaps $\ket{i}$ and $\ket{j}$ and leaves the other computational basis unchanged. For example, applying $X_{01}$ to the state $\ket{\phi}=\alpha_0\ket{0}+\alpha_1\ket{1}+\sum_{i=2}^{d-1}\alpha_{i}\ket{i}$ produces $X_{01}\ket{\phi}=\alpha_1\ket{0}+\alpha_0\ket{1}+\sum_{i=2}^{d-1}\alpha_{i}\ket{i}$. For integer $y$, the $X_{+y}$ gate sends $\ket{i}$ to $\ket{(i+y)\mod d}$ for each $i\in[\underline{d}]$. Because any permutation can be decomposed into a product of at most $d-1$ swap operations \cite{dixon1996permutation}, $X_{+y}$ can be synthesized by at most $d-1$ $X_{ij}$ gates.
\vspace{0.5ex}

\noindent\underline{Controlled gates.} Let $U$ be a single-qudit gate acting on a $d$-level qudit, which is mathematically a $d\times d$ unitary. The $\ket{0}$-controlled $U$ (or $\ket{0}$-$U$ for short), which is a two-qudit gate, acts as 
\[
\ket{0}\otimes\ket{\phi}\mapsto \ket{0}\otimes U\ket{\phi},\quad \ket{i}\otimes\ket{\phi}\mapsto \ket{i}\otimes \ket{\phi} \text{for }i\neq 0. 
\]
That is, it implements $U$ on the target qudit if and only if the control qudit is in the $\ket{0}$ state.  The circuit representation for the $\ket{0}$-$U$ gate is shown in Fig.~\ref{fig:CU_CnU}(a), where the top line and bottom represent the control qudit and target qudit respectively. Let $X_{ij}$ and $X_{+y}$ instantiating $U$ respectively, then we get the $\ket{0}$-$X_{ij}$ and $\ket{0}$-$X_{+y}$ gates. Similarly, we can define $\ket{\ell}$-$U$ for $\ell\in[\underline{d}]$, which fires only when the control qudit is in the $\ket{\ell}$ state. Moreover, we let $\ket{o}$-$U:=\Pi_{\text{odd } \ell}(\ket{\ell}$-$U)$ (and $\ket{e}$-$U:=\Pi_{\text{even } \ell, \ell \ne 0}(\ket{\ell}$-$U)$ resp.) denote the gate that implements $U$ when the control qudit is in the odd (and non-zero even resp.) computational basis.

Let $\mathcal{G}$ denote the gate set $\{\ket{0}\text{-}X_{01}\}\cup\{X_{ij}:i\neq j\}$, and call gates from $\mathcal{G}$ $\mathcal{G}$-gates. An easy observation is that both $\ket{\ell}$-$X_{+y}$ and $\ket{\ell}$-$X_{ij}$ can be synthesized by using $O(d)$  $\mathcal{G}$-gates.
\vspace{0.5ex}

\noindent\underline{Multi-controlled gates.} A multi-controlled gate is just adding more control qudits to a controlled gate. Specifically, the $\ket{0^k}$-controlled $U$ (or $\ket{0^k}$-$U$ for short), where there are $k$ control qudits and one target qudit, acts as 
\[
\ket{0^k}\otimes\ket{\phi}\! \mapsto \! \ket{0^k}\otimes U\ket{\phi}, \ket{\vec{x}}\otimes\ket{\phi}\! \mapsto \! \ket{\vec{x}}\otimes \ket{\phi} \text{for }\vec{x} \in[\underline{d}]^k \backslash 0^k. 
\]
Let $X_{ij}$ and $X_{+y}$ instantiating $U$ respectively, the we get the $\ket{0^k}$-$X_{ij}$ and $\ket{0^k}$-$X_{+y}$ gates. In addition, we also call the $\ket{0^k}$-$X_{01}$ gate the $k$-Toffoli gate. 

For a gate $U$, $U^{\dagger}$ is the inverse of $U$, or mathematically the adjoint of $U$. In particular, $U^{\dagger} = U$ for each $U\in\mathcal{G}$, $X_{+y}^{\dagger} = X_{+(d-y)}$, and $(U_{n}U_{n-1}\cdots U_1)^{\dagger}=U_1^{\dagger} \cdots U_{n-1}^{\dagger} U_n^{\dagger}$. Recall that $UU^\dagger=U^\dagger U=I$. Here, $I$ is the identity operator. 

\begin{figure}
    \centering
    \includegraphics[scale=0.24]{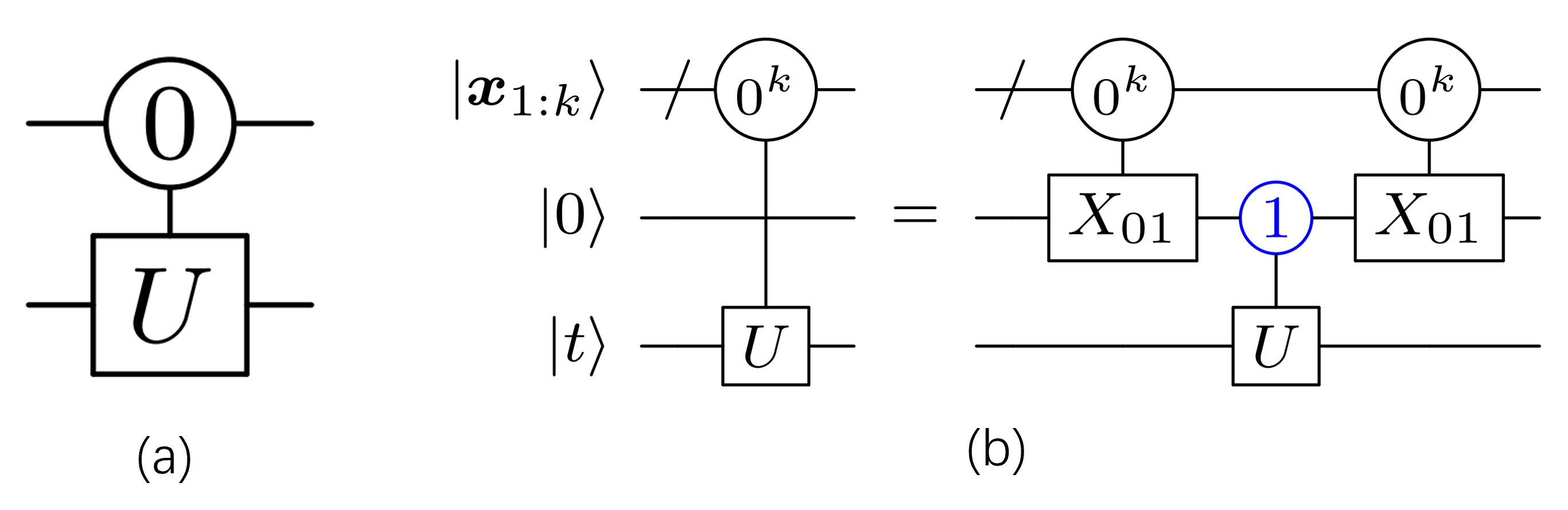}
    \caption{(a) The circuit of $\ket{0}$-$U$. (b) The synthesis of $\ket{0^k}$-$U$.}
    \label{fig:CU_CnU}
\end{figure}

Ancilla qudits are extra qudits not involved in the logical operation that is performed.
According to the initial state and final state, ancilla qudits can be classified into four types:
\begin{itemize}
	\item A {\it Burnable} Ancilla is an ancilla whose initial state is $\ket{0}$ and final state can be arbitrary.
	\item A {\it Clean} Ancilla is an ancilla whose initial state and final state are both  $\ket{0}$.
	\item A {\it Garbage} Ancilla is an ancilla whose initial state and the final state can be both arbitrary.
	\item A {\it Borrowed} Ancilla is an ancilla whose initial state can be arbitrary and final state is the same as the initial state.
\end{itemize}

\section{Synthesis of Multi-Controlled Gates}
\label{sec:main}

In this section, we show how to synthesize the $\ket{0^k}$-$U$ gate by using $O(k)$ two-qudit gates and one clean ancilla. The core is a synthesis of the $k$-Toffoli gate by using $O(k)$ $\mathcal{G}$-gates and at most one borrowed ancilla (see Theorem \ref{the:X01_even} and \ref{the:X01}), which directly leads to the desired synthesis of $\ket{0^k}$-$U$ as shown in Fig.~\ref{fig:CU_CnU}(b). 

In the rest of this section, we focus on the synthesis of the $k$-Toffoli gate on $d$-level qudits. When $d$ is even, the synthesis is essentially the same to that for qubits \cite{barenco1995elementary}. When $d$ is odd, the synthesis turns out to be totally different, which is the main technical part of this paper. 
\vspace{0.5ex}

\subsection{Synthesis of the $k$-Toffoli gate when $d$ is even}

For even $d$, we define $X_{eo}^e:=X_{01}X_{23}\cdots X_{(d-2)(d-1)}$, which swaps the even computational basis with odd ones. Here, the ``$e$" and ``$o$" in the subscript mean ``even" and ``odd", and the ``$e$" in the superscript is to distinguish it from $X_{eo}^o$ which will be defined in Section \ref{subsection:32}.

\begin{lemma}
    \label{lem:2toffoli_even}
	For even $d\geq 3$, the $\ket{00}$-$X_{01}$ can be synthesized by using $O(d)$ $\mathcal{G}$-gates and one borrowed ancilla.
\end{lemma}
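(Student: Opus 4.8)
The plan is to write down an explicit circuit on $c_1,c_2,t$ together with one borrowed $d$-level ancilla $a$, of size $O(d)$ over $\mathcal{G}$, adapting the qubit construction of the doubly-controlled $X$ with one dirty ancilla of Barenco et al.~\cite{barenco1995elementary}. In the qubit case that circuit factors $\ket{00}\text{-}X$ through a ``square root'' $V$ of $X$; here this cannot be copied verbatim, because every $\mathcal{G}$-gate is a permutation matrix, so any circuit over $\mathcal{G}$ is a classical reversible map and, in particular, no reversible single-qudit map squares to a single transposition. The point is to exploit that $d$ is even: this is precisely what makes the parity $2$-colouring of the $d$ basis states consistent, and it is what the fixed-point-free involution $X_{eo}^e$ (the ``parity flip''), together with the parity-controlled gates $\ket{o}\text{-}U$ and $\ket{e}\text{-}U$, encodes. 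Concretely I would use a short alternation of (i) conditional operations on $a$ of the form $\ket{0}_{c_1}\text{-}X_{+y}$, $\ket{0}_{c_1}\text{-}X_{eo}^e$ and their $c_2$-controlled analogues --- each $O(d)$ $\mathcal{G}$-gates by the observation in Section~\ref{sec:pre} --- and (ii) ancilla-controlled single-qudit gates $\ket{\ell}_{a}\text{-}X_{ij}$ on the target, arranged in a compute--act--uncompute pattern so that $\ket{t{=}0}$ and $\ket{t{=}1}$ are swapped exactly when $c_1=c_2=0$ while $a$ is returned to its input.

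For correctness, since every $\mathcal{G}$-gate --- hence both the circuit and the gate $\ket{00}\text{-}X_{01}$ --- is a permutation of the computational basis, it suffices by linearity to check the action on each basis vector $\ket{c_1}\ket{c_2}\ket{t}\ket{a}$, $c_1,c_2,t,a\in[\underline{d}]$. I would trace the circuit in the three cases $c_1\neq 0$; $c_1=0$, $c_2\neq 0$; and $c_1=c_2=0$, verifying identity in the first two and the desired swap (with $a$ restored) in the third. Evenness of $d$ enters exactly in this case analysis: a conditional $X_{eo}^e$ on $a$ flips the ancilla's parity in a way that is consistent for \emph{every} value of $a$ only when $d$ is even --- when $d$ is odd, $X_{eo}^e$ leaves one basis state unpaired, which is why the odd case (treated later) needs a genuinely different construction.

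The main obstacle I anticipate is restoring the borrowed ancilla for all $d$ of its possible input values at once; this is what separates a borrowed from a clean ancilla and forces the compute--act--uncompute structure rather than a single computation. A closely related subtlety rules out the naive attempt: one cannot simply use the commutator $\ket{0}_{c_1}\text{-}A \cdot \ket{0}_{c_2}\text{-}B \cdot \ket{0}_{c_1}\text{-}A^{\dagger} \cdot \ket{0}_{c_2}\text{-}B^{\dagger}$ with $B$ an operation that entangles $t$ and $a$, because controlling such a $B$ on $c_2$ is itself a doubly-controlled gate, which is not among our primitives; the circuit must therefore use only operations on $a$ controlled by $c_1$ or $c_2$, together with operations on $t$ controlled only by $a$. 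Granting such a circuit, the gate count is immediate: $O(1)$ gadgets of size $O(d)$ each, hence $O(d)$ $\mathcal{G}$-gates in total.
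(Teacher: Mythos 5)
Your proposal assembles the right ingredients (the parity flip $X_{eo}^e$, parity/value-controlled gates, a compute--act--uncompute layout to restore a borrowed ancilla, and the observation that square-root-of-$X$ tricks are unavailable over the permutation gate set $\mathcal{G}$), but it stops exactly where the content of the lemma begins: no explicit circuit is written down and no verification is performed. This lemma has no proof other than exhibiting an $O(d)$-size gate sequence and checking, basis state by basis state, that the controls and the borrowed ancilla return to their inputs and that $X_{01}$ hits $\ket{t}$ precisely when $c_1=c_2=0$. Moreover, the most natural instantiation of the plan you sketch provably cannot work: if the only gates touching $t$ are controlled by the ancilla's parity and the ancilla is only acted on by $c_1$- or $c_2$-controlled parity flips, then the number of times $X_{01}$ is applied to $t$, taken mod $2$, is an affine (XOR-type) function of $[c_1{=}0]$, $[c_2{=}0]$ and the ancilla's initial parity, and no such function equals the AND $[c_1{=}0]\wedge[c_2{=}0]$. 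So a genuinely non-linear mechanism is needed, and merely mentioning that value-controlled gates $\ket{\ell}_a$-$X_{ij}$ are also available does not supply one.

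The structural claim you deduce --- that the circuit ``must use only operations on $a$ controlled by $c_1$ or $c_2$, together with operations on $t$ controlled only by $a$'' --- is also unjustified (ruling out one naive commutator attempt does not force this form), and the paper's circuit in Fig.~\ref{fig:CCX_evend} violates it: there the two gates on $t$ are $\ket{0}$-$X_{01}$ gates controlled by $x_1$, while the control qudits themselves are temporarily rewritten (an $X_{01}$ on $x_1$ controlled by $x_2=1$, an $X_{01}$ on $x_2$ controlled by the ancilla being odd, an $X_{02}$ on $x_1$ controlled by $x_2=0$, and an $X_{eo}^e$ on $a$ controlled by $x_1=2$) and restored at the end; correctness is then proved by showing the left/right halves cancel on the controls and by listing, for every input, which of the two $\ket{0}$-$X_{01}$ gates on $t$ fires, so that exactly one fires iff $x_1=x_2=0$. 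Letting information flow through the control qudits in this way is precisely how the paper escapes the parity obstruction described above. To turn your plan into a proof you would have to produce a concrete gate sequence (of the paper's shape, or of your restricted shape if such a circuit exists) and carry out that firing-count verification for all values of $c_1$, $c_2$, $t$ and the borrowed ancilla.
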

\begin{proof}
    Fig.~\ref{fig:CCX_evend} presents a synthesis of $2$-Toffoli by using $O(d)$ $\mathcal{G}$-gates and one borrowed ancilla. To verify the correctness, we will show that: after implementing the circuit, (i) the controls $\ket{\boldsymbol{x}_{1:2}}$ remain unchanged; (ii) the target $\ket{t}$ becomes $X_{01}\ket{t}$ if $x_1=x_2=0$ and unchanged otherwise.

    {\it Part (i)}. After removing the two $\ket{0}$-$X_{01}$ gates which targeted $\ket{t}$, the remaining gates pair off and cancel each other out in a one-to-one manner centered around the deleted gates. For instance, the gates on the left and right of the first removed gate cancel out, followed by the gates on the left and right of the previously eliminated gates, and so on. The outcome is an empty circuit, which confirms that the control qudits remain unaltered after executing the whole circuit. 
    {Besides, it is worth noting that the control qudits remain unchanged just after executing the circuit on the left side of the dashed vertical line.}  

    {\it Part (ii)}. Firstly, we compute and list all the possible input strings that could activate the first $\ket{0}$-$X_{01}$ gate targeting $\ket{t}$:
    \begin{itemize}
    \item When $x_1 = 0$ and $x_2 \notin \{0,1\}$.
    \item When $x_1 = 0$, $x_2 = 0$ or $1$, and $a$ is even.
    \item When $x_1 = 1$, $x_2 = 0$ or $1$, and $a$ is odd.
    \end{itemize}
    {Recalling that the control qudits remain unchanged just after executing the circuit on the left side of the dashed vertical line, one can easily list all the possible input strings that could activate the second $\ket{0}$-$X_{01}$ gate targeting $\ket{t}$:}
    \begin{itemize}
    \item When $x_1 = 0$ and $x_2 \notin \{0,1\}$.
    \item When $x_1 = 0$, $x_2 = 0$, and $a$ is odd.
    \item When $x_1 = 0$, $x_2 = 1$, and $a$ is even.
    \item When $x_1 = 1$, $x_2 = 0$ or $1$, and $a$ is odd.
    \end{itemize}
    Thus, only when the input string satisfies $x_1 = x_2 = 0$, regardless of the state of $\ket{a}$, {exactly one of the two $X_{01}$'s }
    is applied to the target qudit $\ket{t}$. {The conclusion is now immediate by noting that $X_{01}^2 = I$.}
\end{proof}

The 2-Toffoli gate will be used as a gadget to synthesize $k$-Toffoli for larger $k$.

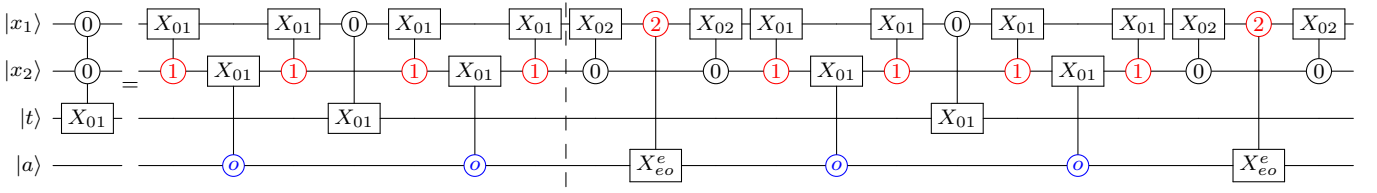
\begin{figure*}[tbp]
    \centering
    \begin{footnotesize}
    \begin{equation*}
    \ \ \ \ \ \ 
        \Qcircuit @C=0.4em @R=0.7em @!R {
        \lstick{\ket{x_1}} & \push{\text{\circled{$0$}}} \qw \qwx[1] & \qw &&& \gate{X_{01}} & \qw & \gate{X_{01}} & \push{\text{\circled{$0$}}} \qw \qwx[2] & \gate{X_{01}} & \qw & \gate{X_{01}} \barrier[-1.4em]{3} & \gate{X_{02}} & \push{\text{\circledr{$2$}}} \qw \qwx[3] & \gate{X_{02}} & \gate{X_{01}} & \qw & \gate{X_{01}} & \push{\text{\circled{$0$}}} \qw \qwx[2] & \gate{X_{01}} & \qw & \gate{X_{01}} & \gate{X_{02}} & \push{\text{\circledr{$2$}}} \qw \qwx[3] & \gate{X_{02}} & \qw \\
        \lstick{\ket{x_2}} & \push{\text{\circled{$0$}}} \qw \qwx[1] & \qw &\dstick{
        =}&& \push{\text{\circledr{$1$}}} \qw \qwx[-1] & \gate{X_{01}} & \push{\text{\circledr{$1$}}} \qw \qwx[-1] & \qw & \push{\text{\circledr{$1$}}} \qw \qwx[-1] & \gate{X_{01}} & \push{\text{\circledr{$1$}}} \qw \qwx[-1] & \push{\text{\circled{$0$}}} \qw \qwx[-1] & \qw & \push{\text{\circled{$0$}}} \qw \qwx[-1] & \push{\text{\circledr{$1$}}} \qw \qwx[-1] & \gate{X_{01}} & \push{\text{\circledr{$1$}}} \qw \qwx[-1] & \qw & \push{\text{\circledr{$1$}}} \qw \qwx[-1] & \gate{X_{01}} & \push{\text{\circledr{$1$}}} \qw \qwx[-1] & \push{\text{\circled{$0$}}} \qw \qwx[-1] & \qw & \push{\text{\circled{$0$}}} \qw \qwx[-1] & \qw \\
        \lstick{\ket{t}} & \gate{X_{01}} & \qw &&& \qw & \qw & \qw & \gate{X_{01}} & \qw & \qw & \qw & \qw & \qw & \qw & \qw & \qw & \qw & \gate{X_{01}} & \qw & \qw & \qw & \qw & \qw & \qw & \qw \\
        \lstick{\ket{a}} & \qw & \qw &&& \qw & \push{\text{\circledb{$o$}}} \qw \qwx[-2] & \qw & \qw & \qw & \push{\text{\circledb{$o$}}} \qw \qwx[-2] & \qw & \qw & \gate{X_{eo}^e} & \qw & \qw & \push{\text{\circledb{$o$}}} \qw \qwx[-2] & \qw & \qw & \qw & \push{\text{\circledb{$o$}}} \qw \qwx[-2] & \qw & \qw & \gate{X_{eo}^e} & \qw & \qw
        }
    \end{equation*}
    \end{footnotesize}
    \caption{The synthesis of $\ket{0^2}$-$X_{01}$ for even $d>3$. $\ket{a}$ is a borrowed ancilla.}
    \label{fig:CCX_evend}
\end{figure*}

\begin{theorem}
	\label{the:X01_even}
	For even $d\geq 3$, the $\ket{0^k}$-$X_{01}$ can be synthesized by using $O(k d^3)$ $\mathcal{G}$-gates and one borrowed ancilla.
\end{theorem}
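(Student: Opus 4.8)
The plan is to mimic the standard qubit construction of Barenco et al.\ \cite{barenco1995elementary}, which reduces a $k$-controlled gate to two $\lceil k/2\rceil$-controlled gates plus a constant number of lower-controlled gates on a borrowed ancilla, and iterate. Concretely, I would first establish a ``conditional'' gadget: given the $2$-Toffoli of Lemma~\ref{lem:2toffoli_even} and, more generally, small multi-controlled $X_{01}$ (or $X_{0j}$) gates built from $O(d)$ $\mathcal{G}$-gates, I would show how to build a $\ket{0^m}$-$X_{0j}$ gate from a $\ket{0^{m-1}}$-controlled gate acting on a fresh target together with one controlled interaction with the real target; the point of allowing the auxiliary swap targets $X_{0j}$ for various $j$ (rather than only $X_{01}$) is to make the ``uncompute'' step cancel correctly even though intermediate qudit values range over $\{0,\dots,d-1\}$ rather than $\{0,1\}$. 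Each such reduction step costs $O(d)$ $\mathcal{G}$-gates to handle the $d$-level bookkeeping.

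Next I would run the Barenco-style recursion. Split the $k$ control qudits into two halves of sizes $k_1=\lceil k/2\rceil$ and $k_2=\lfloor k/2\rfloor$. Using the borrowed ancilla $\ket{a}$ as an intermediate target, compute a $\ket{0^{k_1}}$-controlled flag into $\ket{a}$ (setting $\ket{a}\mapsto X_{+1}\ket{a}$, or some $X_{0j}$-type action, conditioned on the first half being all-zero), then apply a $\ket{0^{k_2}a'}$-controlled $X_{01}$ onto the real target $\ket{t}$ where $a'$ is the appropriate post-shift value of the ancilla, then uncompute the flag on $\ket{a}$. Because one additional control (the ancilla) has been absorbed, each of these three multi-controlled sub-gates has at most $\lceil k/2\rceil+1$ controls, and crucially the borrowed ancilla for the \emph{next} level of recursion can be taken to be one of the now-unused control qudits of the complementary half — exactly the trick that keeps the ancilla count at one throughout. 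Unrolling the recursion to depth $O(\log k)$ would naively give $O(k\log k)$, so to hit the clean $O(kd^3)$ bound I would instead use the standard refinement: once the number of controls drops below a constant fraction, switch to the linear ``staircase'' construction that writes a $\ket{0^k}$-$X_{01}$ as $O(k)$ $\ket{0^2}$-Toffoli gates acting on a chain of borrowed ancillas drawn from the other control wires, so that the total is a geometric sum $O(k)\cdot O(d)\cdot O(d^2) = O(kd^3)$, where the $O(d^2)$ accounts for the $X_{+y}$/permutation overhead incurred when the relevant qudit value is not Boolean and must be folded through $X_{eo}^e$ as in Fig.~\ref{fig:CCX_evend}.

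I would organize the correctness argument exactly as in the proof of Lemma~\ref{lem:2toffoli_even}: split into (i) the control/ancilla registers return to their input values, verified by the cancellation-in-pairs pattern around the two (or $O(1)$) gates that actually touch $\ket{t}$, and (ii) the target receives a net $X_{01}$ iff all $k$ controls are $\ket{0}$, verified by enumerating which branch of the ancilla value makes the central Toffoli fire and observing that for a non-all-zero control string either zero or two of the touching gates fire (so $X_{01}^2=I$ kills the effect). The borrowed-ancilla discipline — initial state arbitrary, restored exactly — follows from (i) applied to $\ket{a}$ as well.

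The main obstacle, and the place where the $d$-level case genuinely differs from the Boolean one, is step (i)/(ii) when the ancilla carries a non-Boolean value: in the qubit case the ancilla flips between $0$ and $1$ and the two halves' controlled actions interlock perfectly, but for a $d$-level borrowed ancilla one must arrange the intermediate $X_{0j}$/$X_{+y}$ gates and the $\ket{o}$-/$\ket{e}$-type controls so that (a) every non-all-zero control pattern still leads to an even number of $X_{01}$'s on $\ket{t}$, and (b) the ancilla is provably restored for \emph{every} one of its $d$ possible input values, not just $\ket{0}$ and $\ket{1}$; getting this bookkeeping right — essentially the content of the $X_{eo}^e$ trick used in Fig.~\ref{fig:CCX_evend} and the reason the gate count picks up the extra $d^3$ factor — is the crux of the proof.
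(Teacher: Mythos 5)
Your high-level architecture coincides with the paper's: one halving step that consumes the single borrowed ancilla, with the two resulting $\approx k/2$-controlled sub-gates expanded by a linear staircase that borrows the idle control wires, and Lemma~\ref{lem:2toffoli_even}'s $2$-Toffoli as the base gadget; the $O(k)\cdot O(d^3)$ accounting also matches (the paper never needs the $O(\log k)$-depth recursion you worry about — a single halving followed by the staircase suffices). However, the proposal is not yet a proof, because you explicitly defer, as ``the crux,'' exactly the construction the paper has to supply: how a flag written onto a borrowed $d$-level ancilla of arbitrary, unknown value can both serve as a control and be exactly restored. As literally stated, your gadget ``compute a $\ket{0^{k_1}}$-controlled flag into $\ket{a}$, apply a gate controlled on the post-shift value $a'$, uncompute'' is ill-defined for a borrowed ancilla: $a'$ depends on the unknown initial value, so it cannot be used as a fixed control pattern, and a three-gate compute/act/uncompute scheme only works for a clean ancilla.

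The paper's resolution (Fig.~\ref{fig:X+1_1_even}) is the four-gate toggle--act--toggle--act pattern in which the toggled operation on the ancilla is the involution $X_{eo}^e$ and the action on $\ket{t}$ is controlled on the ancilla's \emph{parity} (an $\ket{o}$-type control): $X_{eo}^e$ flips the parity of every basis state, so when $\boldsymbol{x}_{1:\lceil k/2\rceil}=0^{\lceil k/2\rceil}$ exactly one of the two $\ket{o}\ket{0^{\lfloor k/2\rfloor}}$-$X_{01}$ gates fires, otherwise zero or two fire and cancel, and $(X_{eo}^e)^2=I$ restores the ancilla for every one of its $d$ possible input values. The same device is what makes the staircase legitimate with borrowed chain wires: you cannot use plain $\ket{00}$-controlled Toffolis along the chain as you propose, since the intermediate wires carry arbitrary values; the paper's chain (Fig.~\ref{fig:X01_an_even}) instead propagates a parity flip via $\ket{o}\ket{0}$-$X_{eo}^e$ gates, and the garbage wires are turned into borrowed ones by appending the reversed chain. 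Supplying these two concrete choices — the $X_{eo}^e$/odd-parity bookkeeping in both the halving step and the staircase, which is exactly what you leave open — is the substance of the paper's proof of Theorem~\ref{the:X01_even}.
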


\begin{proof}
	
	\begin{figure}[tbp]
		\centering
		\footnotesize{
		\begin{equation*}
		\Qcircuit @C=0.9em @R=0.5em @!R{
			\lstick{\ket{x_1}} & \qw & \qw & \qw & \push{\text{\circled{$0$}}} \qw \qwx[1] & \qw & \qw & \qw & \qw\\
			\lstick{\ket{x_2}} & \qw & \qw & \qw & \push{\text{\circled{$0$}}} \qw \qwx[1] & \qw & \qw & \qw & \qw\\
			\lstick{\ket{a_1}} & \qw & \qw & \push{\text{\circledb{$o$}}} \qw \qwx[1] & \gate{X_{eo}^{e}} & \push{\text{\circledb{$o$}}} \qw \qwx[1] & \qw & \qw & \qw\\
			\lstick{\ket{x_3}} & \qw & \qw & \push{\text{\circled{$0$}}} \qw \qwx[1] & \qw & \push{\text{\circled{$0$}}} \qw \qwx[1] & \qw & \qw & \qw\\
			\lstick{\ket{a_2}} & \qw & \push{\text{\circledb{$o$}}} \qw \qwx[1] & \gate{X_{eo}^{e}} & \qw & \gate{X_{eo}^{e}} & \push{\text{\circledb{$o$}}} \qw \qwx[1] & \qw & \qw\\
			\lstick{\ket{x_4}} & \qw & \push{\text{\circled{$0$}}} \qw \qwx[1] & \qw & \qw & \qw & \push{\text{\circled{$0$}}} \qw \qwx[1] & \qw & \qw\\
			\lstick{\ket{a_3}} & \push{\text{\circledb{$o$}}} \qw \qwx[1] & \gate{X_{eo}^{e}} & \qw & \qw & \qw & \gate{X_{eo}^{e}} & \push{\text{\circledb{$o$}}} \qw \qwx[1] & \qw \\
			\lstick{\ket{x_5}} & \push{\text{\circled{$0$}}} \qw \qwx[1] & \qw & \qw & \qw & \qw & \qw & \push{\text{\circled{$0$}}} \qw \qwx[1] & \qw \\
			\lstick{\ket{t}} & \gate{X_{01}} & \qw & \qw & \qw & \qw & \qw & \gate{X_{01}} & \qw \gategroup{1}{3}{7}{7}{1.5em}{--}
		}
		\end{equation*}}
		\caption{The synthesis of $\ket{0^5}$-controlled $X_{01}$ gate with 3 garbage ancilla for even $d$. $\ket{\boldsymbol{x}_{1:5}}$ are the control qudits, $\ket{t}$ is the target qudit, $\ket{\boldsymbol{a}_{1:3}}$ are garbage ancilla. Note that the top gate is $\ket{0^2}$-$X_{eo}^e$, the two bottom gates are $\ket{o}\ket{0}$-$X_{01}$, the other gates are $\ket{o}\ket{0}$-$X_{eo}^e$.}
		\label{fig:X01_an_even}
	\end{figure}
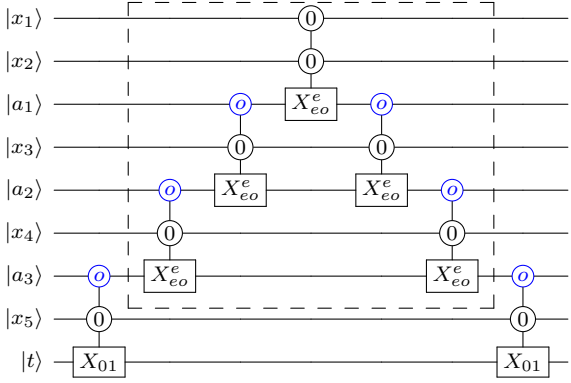
	First, we show how to synthesize the $\ket{0^k}$-$X_{01}$ gate by using $O(k d^3)$ $\mathcal{G}$-gates and $k-2$  {\it garbage} ancilla. The structure of our synthesis is ``$\Lambda$"-like, and Fig.~\ref{fig:X01_an_even} is an illustration for $k=5$. To see the correctness, a key observation is that adding one $\ket{o}\ket{0}$-$X_{eo}^e$ gate to each side of $\ket{0^r}$-$X_{eo}^e$ in the fashion as Fig.~\ref{fig:X01_an_even} produces the $\ket{0^{r+1}}$-$X_{eo}^e$ gate. For example, the circuit in the dashed box implements the $\ket{0^4}$-$X_{eo}^e$ gate whose controls are $\ket{\boldsymbol{x}_{1:4}}$ and target is $\ket{a_3}$. Finally, by adding one $\ket{o}\ket{0}$-$X_{01}$ gate to the $\ket{0^{k-1}}$-$X_{eo}^e$ gate on each side, we obtain the $\ket{0^k}$-$X_{01}$ gate. To verify correctness, it is important to note that the $X_{01}$ gate is applied to $\ket{t}$ only when $x_5=0$ and the circuit within the dashed box changes $\ket{a_3}$. The circuit size can be easily seen to be $O(kd^3)$ by noting that (i) a $\ket{0^2}$-$X_{ij}$ gate can be synthesized by using $O(d)$ $\mathcal{G}$-gates according to Lemma~\ref{lem:2toffoli_even}; (ii) the $\ket{0^2}$-$X_{eo}^e$ gate can be synthesized by using $O(d)$ $\ket{0^2}$-$X_{ij}$ gates; and (iii) the $\ket{o}\ket{0}$-$X_{eo}^e$ gate can be synthesized by using $O(d)$ $\ket{0^2}$-$X_{eo}^e$ gates and $O(d)$ $X_{ij}$ gates.
	
	Second, the initial state of the $k-2$ garbage ancilla can be recovered by adding a component to the end of the circuit which reverses all the gates but the two in the bottom (e.g., adding the reverse of the dashed box to the end in Fig.~\ref{fig:X01_an_even}). So, the $k-2$ garbage ancilla can be replaced with borrowed ancilla while keeping the circuit size still $O(kd^3)$.
	
	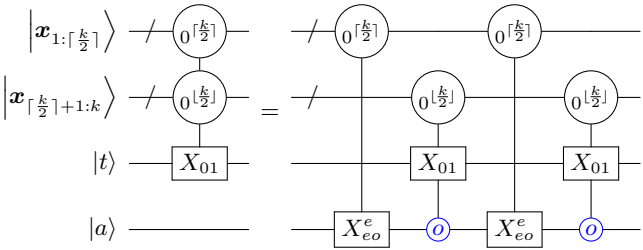
\begin{figure}[tbp]
		\centering
		\begin{small}
		\begin{equation*}
		\ \ \ \ \ \ \ \ \ \ 
		\Qcircuit @C=0.9em @R=0.5em @!R {
			\lstick{\ket{\boldsymbol{x}_{1:\lceil \frac{k}{2}\rceil}}} &{/} \qw & \push{\scriptsize{\text{\circled{$0^{\tiny{\lceil \! \frac{k}{2} \! \rceil}}$}}}} \qw \qwx[1]  & \qw &&& {/} \qw&\push{\scriptsize{\text{\circled{$0^{\tiny{\lceil \! \frac{k}{2} \! \rceil}}$}}}} \qw \qwx[3] & \qw & \push{\scriptsize{\text{\circled{$0^{\tiny{\lceil \! \frac{k}{2} \! \rceil}}$}}}} \qw \qwx[3]& \qw &\qw \\
			\lstick{\ket{\boldsymbol{x}_{\lceil \frac{k}{2}\rceil + 1:k}}}&{/} \qw & \push{\scriptsize{\text{\circled{$0^{\tiny{\lfloor \! \frac{k}{2} \! \rfloor}}$}}}} \qw \qwx[1] & \qw &\dstick{=}&& {/} \qw&\qw & \push{\scriptsize{\text{\circled{$0^{\tiny{\lfloor \! \frac{k}{2} \! \rfloor}}$}}}} \qw \qwx[1] &\qw & \push{\scriptsize{\text{\circled{$0^{\tiny{\lfloor \! \frac{k}{2} \! \rfloor}}$}}}} \qw \qwx[1] &\qw \\
			\lstick{\ket{t}} & \qw & \gate{X_{01}} & \qw &&& \qw &\qw & \gate{X_{01}} &\qw & \gate{X_{01}} &\qw  \\
			\lstick{\ket{a}} & \qw & \qw & \qw &&& \qw &\gate{X_{eo}^{e}} & \push{\text{\circledb{$o$}}} \qw \qwx[-1] &\gate{X_{eo}^{e}} & \push{\text{\circledb{$o$}}} \qw \qwx[-1] &\qw 
		}
		\end{equation*}
		\end{small}
		\caption{The synthesis of $\ket{0^k}$-$X_{01}$ gate with one borrowed ancilla for even $d$. $\ket{\boldsymbol{x}_{1:k}}$ are the control qudits, $\ket{t}$ is the target qudit, $\ket{a}$ is one borrowed ancilla.}
		\label{fig:X+1_1_even}
	\end{figure}

    Finally, by implementing the synthesis illustrated in Fig.~\ref{fig:X+1_1_even}, the number of ancilla can be reduced to only one from the initial $k-2$. {To verify the correctness, one can first see that the ancilla remains unchanged after applying the circuit by noting that $(X_{eo}^e)^2=I$, then one can check that:
    \begin{itemize}
    \item When $\boldsymbol{x}_{1:\lceil k/2\rceil} \neq 0^{\tiny{\lceil k/2 \rceil}}$: $X_{01}^2=I$ pledges that $\ket{t}$ remains unchanged.
    \item When $\boldsymbol{x}_{\lceil k/2\rceil + 1:k}\neq 0^{\tiny{\lfloor k/2 \rfloor}}$: no gates are applied to $\ket{t}$.
    \item When $\boldsymbol{x}_{1:k}=0^k$: only one $\ket{o}\ket{0^{\tiny{\lfloor k/2 \rfloor}}}$-$X_{01}$ is active, which applies the $X_{01}$ gate on $\ket{t}$.
    \end{itemize}}
    Furthermore, to estimate the size of the circuit in Fig.~\ref{fig:X+1_1_even}, we need to synthesize $\ket{0^k}$-$X_{eo}^e$ and $\ket{o}\ket{0^{k-1}}$-$X_{01}$ gates. These two gates can be synthesized by making slight changes to the circuit of $\ket{0^k}$-$X_{01}$, as shown in Fig.~\ref{fig:X01_an_even}. In particular, to synthesize $\ket{0^k}$-$X_{eo}^e$, we can replace the two bottom $\ket{o}\ket{0}$-$X_{01}$ gates with two $\ket{o}\ket{0}$-$X_{eo}^e$ gates. Similarly, to synthesize $\ket{o}\ket{0^{k-1}}$-$X_{01}$, we can replace the top $\ket{00}$-$X_{01}$ gates with the $\ket{o}\ket{0}$-$X_{01}$ gates. Therefore, the circuit size remains $O(kd^3)$.
\end{proof}

We remark that the one borrowed ancilla in Theorem \ref{the:X01_even} is necessary. By contraction, suppose the ancilla can be saved, then in the $k+1$ qudit system ($k$ controls and one target), (i) any $\mathcal{G}$-gate is an even permutation on the computational basis $\{\ket{x_1,\cdots,x_k,t}\mid x_1,\cdots,x_k,t\in[\underline{d}]\}$ and (ii) the $\ket{0^k}$-$X_{01}$ gate is an odd permutation on the computational basis. By the fact that a composition of even permutations is still even, we reach a contradiction. 

\subsection{Synthesis of the $k$-Toffoli gate when $d$ is odd}\label{subsection:32}

For odd $d\geq 3$, Fig.~\ref{fig:CCX} presents an ancilla-free synthesis of the $2$-Toffoli gate by using $O(d)$ $\mathcal{G}$-gates, which is a natural generalization of that for $d=3$ synthesized in \cite{yeh2022constructing}. The $2$-Toffoli gate will be used as a gadget to synthesize $k$-Toffoli for larger $k$.

\begin{lemma}
    \label{lem:2toffoli_odd}
	For odd $d\geq 3$, the $\ket{00}$-$X_{01}$ can be synthesized by using $O(d)$ $\mathcal{G}$-gates.
\end{lemma}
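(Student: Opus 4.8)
The plan is to mimic, at the level of abstract structure, the even-$d$ construction in Lemma~\ref{lem:2toffoli_even}, but working with a conjugation-based ``phase-accumulation'' gadget that is available only when $d$ is odd. Since $d$ is odd, the cyclic shift $X_{+1}$ is an even permutation on $[\underline d]$ (a single $d$-cycle with $d$ odd has even sign), so it is itself a $\mathcal{G}$-expressible even permutation — but more importantly, conjugating a controlled increment by an even shift lets us turn a ``fire on $\ket{0}$'' control into a ``fire on $\ket{\ell}$ for all $\ell$ in some residue class'' control without spending an ancilla. Concretely, for odd $d$ one can build a gate of the form $\ket{o}$-$X_{+1}$ (fire $X_{+1}$ on the target precisely when the control is in an odd basis state) using $O(d)$ $\mathcal{G}$-gates, by the observation already recorded in the preliminaries that $\ket{\ell}$-$X_{+y}$ and $\ket{\ell}$-$X_{ij}$ each cost $O(d)$ $\mathcal{G}$-gates, and there are only $O(d)$ residues to union over. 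The auxiliary single-qudit gate $X_{eo}^o$ promised in the text (the odd-$d$ analogue of $X_{eo}^e$) plays the role that $X_{eo}^e$ played for even $d$: it is the fixed single-qudit permutation that the inner controls will toggle.

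First I would set up the gadget: define $X_{eo}^o$ explicitly (it should be a product of $O(d)$ transpositions $X_{ij}$ chosen so that, on the ``active'' set of basis states, a single application flips the parity class of the target register — precisely the property needed so that $X_{01}$ gets applied an odd number of times exactly on the all-zero control pattern). Second, I would write down a circuit on the two control qudits $\ket{x_1},\ket{x_2}$ and the target $\ket{t}$ — no ancilla — that alternates $\ket{0}$-$X_{01}$ gates on $\ket{t}$ with $\ket{\ell}$-type controlled gates linking $\ket{x_1}$ and $\ket{x_2}$, arranged so that (i) deleting the two $\ket{0}$-$X_{01}$ gates leaves a circuit that telescopes to the identity (hence the controls $\ket{x_1},\ket{x_2}$ are restored), and (ii) a careful enumeration of which basis states $(x_1,x_2,t)$ activate the first vs.\ the second $\ket{0}$-$X_{01}$ shows that $X_{01}$ is applied to $\ket{t}$ an odd number of times exactly when $x_1=x_2=0$. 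Third, I would invoke $X_{01}^2=I$ to conclude correctness, and count: each of the constant number of building blocks costs $O(d)$ $\mathcal{G}$-gates, so the total is $O(d)$.

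The parts (i) and (ii) of the verification are the heart of it, and the enumeration in (ii) is the main obstacle: unlike the even case, where $X_{eo}^e$ is a clean involution that pairs even with odd basis states, for odd $d$ the shift/permutation structure is subtler — the ``fire on an odd residue'' condition interacts with a $d$-cycle that has no fixed point, so one must check the activation table for $\ket{t}$ in each of the three control regimes ($x_1=0$; $x_1=x_2=0$; and the generic case) and confirm the parity of the number of $X_{01}$ firings is as desired. I expect this bookkeeping, rather than any conceptual difficulty, to consume most of the proof; the size bound $O(d)$ and the telescoping argument for part (i) should then be routine, exactly paralleling the corresponding steps in the proof of Lemma~\ref{lem:2toffoli_even}.
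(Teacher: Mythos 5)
There is a genuine gap: your proposal is a plan rather than a proof. The entire content of the lemma is an explicit ancilla-free circuit together with the case analysis showing $X_{01}$ hits $\ket{t}$ an odd number of times exactly when $x_1=x_2=0$, and you defer precisely these two items (``I would write down a circuit \dots arranged so that (i) \dots and (ii) \dots'' and ``I expect this bookkeeping \dots to consume most of the proof''). No candidate gate sequence is exhibited, so there is nothing to verify. Moreover, the mechanism you gesture at does not match anything that is known to work: you propose to transplant the skeleton of Lemma~\ref{lem:2toffoli_even}, but that construction leans essentially on the borrowed ancilla $\ket{a}$ (the parity of $a$ is what arbitrates which of the two $\ket{0}$-$X_{01}$ gates fires), and you give no substitute for that role once the ancilla is removed. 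The gate $X_{eo}^o$ you put at the center does not in fact appear in the paper's $2$-Toffoli circuit at all (it only enters the later $k$-Toffoli synthesis), and your claim that $\ket{o}$-$X_{+1}$ costs $O(d)$ $\mathcal{G}$-gates is off by a factor: it is a product of $\Theta(d)$ gates $\ket{\ell}$-$X_{+1}$, each costing $O(d)$, hence $O(d^2)$ (harmless only because the paper declares poly$(d)$ factors hidden, but it signals the construction was not actually carried out).

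The paper's device, which your sketch misses, is different and quite short: apply $\ket{0}$-$X_{01}$ (control $x_1$, target $t$); then conjugate the \emph{second control} by $\ket{0}$-controlled increments, i.e.\ apply $\ket{0}$-$X_{+1}$ and later $\ket{0}$-$X_{-1}$ with control $x_1$ and target $x_2$, sandwiching two $\ket{e}$-$X_{01}$ gates (control $x_2$, target $t$, firing on nonzero even values). If $x_1\neq 0$ nothing touches $x_2$, so the two $\ket{e}$-gates cancel and $t$ is untouched. If $x_1=0$ and $x_2\neq 0$, exactly one of the two $\ket{e}$-gates fires — here oddness of $d$ is used: an odd $x_2$ is at most $d-2$, so $x_2+1$ is a \emph{nonzero} even value — and it cancels the initial $X_{01}$. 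If $x_1=x_2=0$, neither $\ket{e}$-gate fires and the single initial $X_{01}$ survives. Your outline contains neither this conjugation of the control register nor any concrete replacement, so the key step of the lemma remains unproved.
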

\begin{proof}
    Fig.~\ref{fig:CCX} depicts the synthesis of $2$-Toffoli using $O(d)$ $\mathcal{G}$-gates. {In the rest of the proof, we verify the correctness. First, it is obvious that the two control qudits are unchanged since $X_{+1}X_{-1}=I$. Then, we have
    \begin{itemize}
    \item If $x_1 \neq 0$, the target $\ket{t}$ is unchanged since $X_{01}^2=I$.
    \item If $x_1 = 0$ and $x_2 \neq 0$, at least one $X_{01}$ gate is applied to the target $\ket{t}$; for odd $x_2$, the left $\ket{e}$-$X_{01}$ gate is activated and for even $x_2$, the right $\ket{e}$-$X_{01}$ gate is activated. Thus, two $X_{01}$ gates are consistently applied, canceling each other out.
    \item If $x_1 = x_2 = 0$, only the leftmost $\ket{0}$-$X_{01}$ gate changes the target qudit to $X_{01}\ket{t}$.
    \end{itemize}
    Therefore, it follows that $\ket{t}$ becomes $X_{01}\ket{t}$ if $x_1 = x_2 = 0$ and unchanged otherwise.}
\end{proof}

\begin{figure}[tbp]
    \centering
    \footnotesize{
    \begin{equation*}
        \Qcircuit @C=0.9em @R=0.7em @!R {
        \lstick{\ket{x_1}}&\push{\text{\circled{$0$}}} \qw \qwx[1] & \qw &&& \push{\text{\circled{$0$}}} \qw \qwx[2] & \push{\text{\circled{$0$}}} \qw \qwx[1] & \qw & \push{\text{\circled{$0$}}} \qw \qwx[1] & \qw & \qw \\
        \lstick{\ket{x_2}}&\push{\text{\circled{$0$}}} \qw \qwx[1] & \qw &\push{=}&& \qw & \gate{X_{+1}} & \push{\text{\circledb{$e$}}} \qw \qwx[1] & \gate{X_{-1}} & \push{\text{\circledb{$e$}}} \qw \qwx[1] & \qw \\
        \lstick{\ket{t}}&\gate{X_{01}} & \qw &&& \gate{X_{01}} & \qw & \gate{X_{01}} & \qw & \gate{X_{01}} & \qw
        }
    \end{equation*}}
    \caption{The synthesis of $\ket{0^2}$-$X_{01}$ gate for odd $d$, where $X_{-1}:=X_{+(d-1)}=X_{+1}^{\dagger}$.}
    \label{fig:CCX}
\end{figure}
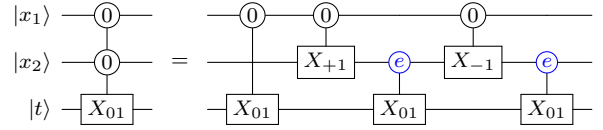

\begin{figure}[tbp]
    \centering
    \footnotesize{
    \begin{equation*}
        \Qcircuit @C=0.9em @R=0.5em @!R {
        & \push{\text{\circledr{$\star$}}} \qw \qwx[1] & \qw &&& \push{\text{\circledr{$1$}}} \qw \qwx[1] & \push{\text{\circledr{$2$}}} \qw \qwx[1] & \qw &\push{\cdots} && \push{\tiny{\text{\circledr{$d\! \! - \! \! 1$}}}} \qw \qwx[1] & \qw\\
        & \push{\text{\circled{$0$}}} \qw \qwx[1] & \qw &\push{=}&& \push{\text{\circled{$0$}}} \qw \qwx[1] & \push{\text{\circled{$0$}}} \qw \qwx[1] & \qw & \push{\cdots} && \push{\text{\circled{$0$}}} \qw \qwx[1] & \qw \\
        & \gate{X_{+\star}} & \qw &&& \gate{X_{+1}} & \gate{X_{+2}} & \qw & \push{\cdots} && \gate{X_{+(d-1)}} & \qw
        }
    \end{equation*}}
    \caption{$\ket{\star}\ket{0}$-$X_{+\star}$ gate.}
    \label{fig:CyX}
\end{figure}

For the convenience of presentation, we define the $\ket{\star}\ket{0}$-$X_{+\star}$ gate as in Fig.~\ref{fig:CyX}, for each state $\ket{y}$ of the first control which implements $X_{+y}$ on the target if the second control is $\ket{0}$. 
Noting that $\ket{0^2}$-$X_{ij}$ can be synthesized by using $O(d)$ $\mathcal{G}$-gates based on the circuit in Fig.~\ref{fig:CCX}, we have $\ket{0^2}$-$X_{+i}$ can be synthesized by using $O(d^2)$ $\mathcal{G}$-gates for any $i\in[\underline{d}]$, which further implies that $\ket{\star}\ket{0}$-$X_{+\star}$ can be synthesized by using $O(d^3)$ $\mathcal{G}$-gates. 

Similarly, we can define the $\ket{\star}\ket{0}$-$X_{-\star}$ gate, for each state $\ket{y}$ of the first control which implements $X_{-y}:=X_{+(d-y)}$ on the target if the second control is $\ket{0}$. Note that $\ket{\star}\ket{0}$-$X_{-\star}$ can also be synthesized by using $O(d^3)$ $\mathcal{G}$-gates. A useful fact is that $\ket{\star}\ket{0}$-$X_{-\star}$ is the inverse operation of $\ket{\star}\ket{0}$-$X_{+\star}$.

The following synthesis of $\ket{0^k}$-$X_{+1}$ will be used to synthesize $k$-Toffoli gate. 
\begin{lemma}
\label{lem:X+1_1}
For odd $d\geq 3$, $\ket{0^k}$-$X_{+1}$ can be synthesized by using $O(k d^3)$ $\mathcal{G}$-gates and $k-2$ borrowed ancilla.
\end{lemma}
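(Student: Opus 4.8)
The plan is to realise the gate as a single group commutator $\Psi\,R\,\Psi^{-1}R^{-1}$ of two cheaper gates, where the ``inner'' gate $R$ is itself defined by a $\Lambda$-shaped, nested-palindrome recursion whose size grows only \emph{additively} in the number of controls. This mirrors the borrowed-ancilla multi-controlled-$X$ network of \cite{barenco1995elementary}, but the cancellations are arranged differently: the qubit version uses that each Toffoli is self-inverse, whereas for odd $d$ I will instead use that $\ket{\star}\ket0$-$X_{-\star}$ is the \emph{exact} inverse of $\ket{\star}\ket0$-$X_{+\star}$.

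Write $\mathbf 1[\,\cdot\,]$ for the Iverson bracket, and for $1\le j\le k-2$ put $p_j:=\mathbf 1[x_1=\cdots=x_{j+1}=0]$, so $p_j=p_{j-1}\cdot\mathbf 1[x_{j+1}=0]$. I will build circuits $R_1,\dots,R_{k-2}$ on the controls and the borrowed ancilla $a_1,\dots,a_{k-2}$ and prove by induction on $j$ the invariant: $R_j$ fixes every $x_i$, sends $a_i\mapsto a_i+p_i\pmod d$ for each $i\le j$, and leaves $a_{j+1},\dots,a_{k-2}$ untouched. Take $R_1:=\ket{00}$-$X_{+1}$ acting on $(x_1,x_2;a_1)$, which has size $O(d^2)$ $\mathcal{G}$-gates (as established above from Lemma~\ref{lem:2toffoli_odd}) and whose action on $a_1$ is exactly $a_1\mapsto a_1+p_1$. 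For $j\ge2$ let $\Phi_j$ be the $\ket{\star}\ket0$-$X_{+\star}$ gate with first control $a_{j-1}$, second control $x_{j+1}$, target $a_j$ (size $O(d^3)$ $\mathcal{G}$-gates), and let $R_j$ execute, in order, $\Phi_j^{-1}$, then $R_{j-1}$, then $\Phi_j$. For the induction: $\Phi_j^{-1}$ changes $a_j$ to $a_j-\mathbf 1[x_{j+1}=0]\cdot a_{j-1}$; then $R_{j-1}$ replaces the current value $\mu$ of $a_{j-1}$ by $\mu+p_{j-1}$ (updating $a_1,\dots,a_{j-2}$ as prescribed and not touching $a_j$); then $\Phi_j$ adds $\mathbf 1[x_{j+1}=0]\cdot(\mu+p_{j-1})$ back, for a net change to $a_j$ of $\mathbf 1[x_{j+1}=0]\cdot p_{j-1}=p_j$, while the net changes to $a_1,\dots,a_{j-1}$ are those of $R_{j-1}$ together with $a_{j-1}\mapsto a_{j-1}+p_{j-1}$ --- exactly the invariant. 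Oddness of $d$ enters only here, implicitly: because $\Phi_j^{-1}$ truly inverts $\Phi_j$, the mismatch produced by $a_{j-1}$ changing between the two invocations is \emph{exactly} $p_{j-1}$ (times $\mathbf 1[x_{j+1}=0]$), not merely something that vanishes mod $2$. Since $R_j$ calls $R_{j-1}$ once and adds two copies of $\Phi_j$, $\mathrm{size}(R_j)=\mathrm{size}(R_{j-1})+O(d^3)$, hence $\mathrm{size}(R_{k-2})=O(kd^3)$.

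To assemble the lemma (for $k\ge3$; $k\le2$ is immediate), let $\Psi$ be the $\ket{\star}\ket0$-$X_{+\star}$ gate with first control $a_{k-2}$, second control $x_k$, target $t$ (size $O(d^3)$ $\mathcal{G}$-gates), and output the circuit executing $R_{k-2}^{-1}$, then $\Psi^{-1}$, then $R_{k-2}$, then $\Psi$. Writing $\alpha_i$ for the initial value of $a_i$: $R_{k-2}^{-1}$ sets $a_i\mapsto\alpha_i-p_i$ for all $i$; $\Psi^{-1}$ subtracts $\mathbf 1[x_k=0]\cdot(\alpha_{k-2}-p_{k-2})$ from $t$; $R_{k-2}$ restores every $a_i$ to $\alpha_i$ exactly (again where exact $\mathbb Z_d$-inverses matter, since $2p_i\not\equiv0$); and $\Psi$ adds $\mathbf 1[x_k=0]\cdot\alpha_{k-2}$ to $t$. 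The two contributions to $t$ sum to $+\mathbf 1[x_k=0]\cdot p_{k-2}=+\mathbf 1[x_1=\cdots=x_k=0]$, all $a_i$ return to their initial states, and no control is disturbed; this is $\ket{0^k}$-$X_{+1}$ (should this orientation of the commutator give increment $-1$, swap $\Psi\leftrightarrow\Psi^{-1}$). The total size is $2\,\mathrm{size}(R_{k-2})+O(d^3)=O(kd^3)$ with only the $k-2$ borrowed ancilla $a_1,\dots,a_{k-2}$.

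The main obstacle is achieving \emph{linear} size while still tolerating borrowed rather than clean ancilla: the naive recursion ``write $\ket{0^k}$-$X_{+1}$ as a commutator involving $\ket{0^{k-1}}$-$X_{+1}$'' doubles the gate count at each level. The nested-palindrome $R_j$ avoids this because it invokes $R_{j-1}$ only once; the price is having to carry, through the induction, the intended side-effects of $R_j$ on the lower ancilla $a_1,\dots,a_{j-1}$ and verify they cancel when $R_{k-2}$ is paired with $R_{k-2}^{-1}$ in the outer commutator. The second point needing care, as stressed, is that the qubit construction's reliance on $X^2=I$ must be replaced throughout by honest $\mathbb Z_d$-inverses, using the noted fact that $\ket{\star}\ket0$-$X_{-\star}$ inverts $\ket{\star}\ket0$-$X_{+\star}$.
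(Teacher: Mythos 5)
Your proposal is correct and is essentially the paper's own construction: your nested palindrome $R_j=\Phi_j^{-1}R_{j-1}\Phi_j$ built from $\ket{\star}\ket{0}$-$X_{\pm\star}$ gates around a central $\ket{00}$-$X_{+1}$ is exactly the paper's ``$\Lambda$''-shaped circuit (Fig.~\ref{fig:X+1_an} without the two bottom gates), and your outer commutator with $\Psi$ is the paper's ``run the garbage-ancilla circuit, then append the inverse of the dashed box'' step up to a harmless reordering of the same four blocks. Same gadgets, same $O(kd^3)$ gate count, same $k-2$ borrowed ancilla.
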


\begin{proof}
	First, we show how to synthesize the $\ket{0^k}$-$X_{+1}$ gate by using $O(k d^3)$ $\mathcal{G}$-gates and $k-2$ {\it garbage} ancilla. The structure of our synthesis is ``$\Lambda$"-like, and Fig.~\ref{fig:X+1_an} is an illustration for $k=5$. To see the correctness, a key observation is that adding one $\ket{\star}\ket{0}$-$X_{-\star}$ gate and one $\ket{\star}\ket{0}$-$X_{+\star}$ gate to the left side and right side respectively of the $\ket{0^r}$-$X_{+1}$ gate in the fashion as Fig.~\ref{fig:X+1_an} produces the $\ket{0^{r+1}}$-$X_{+1}$ gate. For example, the circuit in the dashed box implements $\ket{0^4}$-$X_{+1}$ whose controls are $\ket{\boldsymbol{x}_{1:4}}$ and target is $\ket{a_3}$, then by adding the two gates in the bottom, we obtain the $\ket{0^5}$-$X_{+1}$ gate. The circuit size can be easily seen to be $O(kd^3)$. 
	
	Second, the initial state of the $k-2$ garbage ancilla can be recovered by adding a component to the end of the circuit which reverses all the gates but the two in the bottom (e.g., adding the reverse of the dashed box to the end in Fig.~\ref{fig:X+1_an}). So, the $k-2$ garbage ancilla can be replaced with borrowed ancilla while keeping the circuit size still $O(kd^3)$.
 \end{proof}

\begin{figure}[tbp]
    \centering
    \footnotesize{
    \begin{equation*}
    \Qcircuit @C=0.7em @R=0.5em @!R{
    \lstick{\ket{x_1}} & \qw & \qw & \qw & \push{\text{\circled{$0$}}} \qw \qwx[1] & \qw & \qw & \qw & \qw \\
    \lstick{\ket{x_2}} & \qw & \qw & \qw & \push{\text{\circled{$0$}}} \qw \qwx[1] & \qw & \qw & \qw & \qw\\
    \lstick{\ket{a_1}} & \qw & \qw & \push{\text{\circledr{$\star$}}} \qw \qwx[1] & \gate{X_{+1}} & \push{\text{\circledr{$\star$}}} \qw \qwx[1] & \qw & \qw & \qw\\
    \lstick{\ket{x_3}} & \qw & \qw & \push{\text{\circled{$0$}}} \qw \qwx[1] & \qw & \push{\text{\circled{$0$}}} \qw \qwx[1] & \qw & \qw & \qw\\
    \lstick{\ket{a_2}} & \qw & \push{\text{\circledr{$\star$}}} \qw \qwx[1] & \gate{X_{-\star}} & \qw & \gate{X_{+\star}} & \push{\text{\circledr{$\star$}}} \qw \qwx[1] & \qw & \qw\\
    \lstick{\ket{x_4}} & \qw & \push{\text{\circled{$0$}}} \qw \qwx[1] & \qw & \qw & \qw & \push{\text{\circled{$0$}}} \qw \qwx[1] & \qw & \qw\\
    \lstick{\ket{a_3}} & \push{\text{\circledr{$\star$}}} \qw \qwx[1] & \gate{X_{-\star}} & \qw & \qw & \qw & \gate{X_{+\star}} & \push{\text{\circledr{$\star$}}} \qw \qwx[1] & \qw\\
    \lstick{\ket{x_5}} & \push{\text{\circled{$0$}}} \qw \qwx[1] & \qw & \qw & \qw & \qw & \qw & \push{\text{\circled{$0$}}} \qw \qwx[1] & \qw\\
    \lstick{\ket{t}} & \gate{X_{-\star}} & \qw & \qw & \qw & \qw & \qw & \gate{X_{+\star}} & \qw \gategroup{1}{3}{7}{7}{1.5em}{--}
    }
\end{equation*}}
    \caption{The synthesis of $\ket{0^5}$-$X_{+1}$ gate with 3 garbage ancilla. $\ket{\boldsymbol{x}_{1:5}}$ are the control qudits, $\ket{t}$ is the target qudit, $\ket{\boldsymbol{a}_{1:3}}$ are garbage ancilla.}
    \label{fig:X+1_an}
\end{figure}

In the rest of this subsection, we show how to synthesize the $k$-Toffoli gate. The framework is presented in Fig.~\ref{fig:X_01}. There, $\ket{0}$-$(X_{eo}^{o})^{\otimes (k-1)}$ represents the composition of $(k-1)$ $\ket{0}$-$X_{eo}^o$ gates whose controls are all $\ket{x_k}$ and targets are $\ket{x_1},\cdots,\ket{x_{k-1}}$ respectively. And $X_{eo}^o$ is defined to be $X_{12}X_{34}\cdots X_{(d-2)(d-1)}$. The module $P_k$ is a classical reversible operation on $k$ qudits and acts as:
\[
P_k \ket{x_1,\dots,x_{k-1},x_k} =\ket{x_1,\dots,x_{k-1},h(x_1,\cdots,x_{k})}
\]
where the function $h:[\underline{d}]^{k}\rightarrow [\underline{d}]$ is defined as follows: given $x_{1},\cdots,x_{k}\in[\underline{d}]$, let $i^\ast\in[k-1]$ be the last index such that $x_{i^\ast}\neq 0$ and $x_{i}=0$ for any $i^\ast < i < k$. If there is no such $i^\ast$, i.e., $x_1\cdots x_{k-1}=0^{k-1}$, we let $i^\ast=\perp$.
\begin{itemize}
	\item $h(x_1,\cdots,x_k)=x_k$ if $i^\ast\neq \perp$ and $x_{i^\ast}$ is odd;
	\item otherwise, $i^\ast=\perp$ or $x_{i^\ast}$ is even, let $h(x_1,\cdots,x_k)=x_k-1 \mod{d}$.
\end{itemize}
For example, when $k=2$, $h(x_1,x_2)=x_{2}$ if $x_1$ is odd, and $h(x_1,x_2)=x_{2}-1 \mod{d}$ otherwise. For general $k$, if $x_1\cdots x_{k-1}=10^{k-2}$, then $i^\ast=1$ and $x_{1}$ is odd, and we have $h(x_1,\cdots,x_{k})=x_k$. We call the qudits $\ket{x_1},\cdots,\ket{x_{k-1}}$ controls of $P_k$ and $\ket{x_{k}}$ target of $P_k$.

\begin{lemma}
\label{lem:Pn}
For odd $d\geq 3$, both $P_k$ and $P_k^\dagger$ can be synthesized by using $O(k d^3)$ $\mathcal{G}$-gates and one borrowed ancilla.
\end{lemma}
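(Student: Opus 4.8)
The gate $P_k$ performs exactly one action: it leaves $\ket{x_1,\dots,x_{k-1}}$ untouched and sends $\ket{x_k}$ to $X_{-1}\ket{x_k}=\ket{x_k-1\bmod d}$ in precisely the cases where the rule defining $h$ calls for a decrement --- i.e.\ when $x_1\cdots x_{k-1}=0^{k-1}$ or the last nonzero control $x_{i^\ast}$ is even --- and acts as the identity otherwise. Morally, $P_k$ is a $\ket{0^{k-1}}$-$X_{-1}$ gate together with a correction that also decrements whenever the last nonzero control happens to be even. Because $P_k^\dagger$ is the same operation with $X_{-1}$ replaced by $X_{+1}$ (equivalently, the circuit for $P_k$ run in reverse), every bound below applies to $P_k^\dagger$ as well, so I only discuss $P_k$.

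I would proceed in two stages. \emph{Stage 1: $P_k$ with $k-2$ borrowed ancilla.} The structural fact that makes this work is that the ``decrement'' predicate is local in the last nonzero coordinate: adjoining a new last control $x_r$ to $x_1,\dots,x_{r-1}$ changes the predicate only if $x_r\neq0$, and then only through the parity of $x_r$. This gives a ``$\Lambda$''-shaped recursion, just as in Lemma~\ref{lem:X+1_1}/Fig.~\ref{fig:X+1_an}, that builds $P_{r+1}$ on $(\boldsymbol{x}_{1:r};z)$ from $P_r$ on $(\boldsymbol{x}_{1:r-1};a)$ with $a$ a relay qudit: conjugate $P_r$ by $\ket{\star}\ket{0}$-$X_{+\star}$ and $\ket{\star}\ket{0}$-$X_{-\star}$ on $(a,x_r;z)$ (which, when $x_r=0$, carries the running decrement stored in $a$ onto $z$ exactly as in Fig.~\ref{fig:X+1_an}), then compose with one two-qudit $\ket{e}_{x_r}$-$X_{-1}$ on $(x_r;z)$, which fires --- only when $x_r\neq0$ --- precisely when $x_r$ is even, matching $g(\boldsymbol{x}_{1:r})=0$ in that case. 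Unrolling down to the two-qudit base case $P_2(x_1;z)$ (decrement $z$ iff $x_1$ is not odd; $O(d^2)$ $\mathcal{G}$-gates) yields $O(k)$ relay gates of type $\ket{\star}\ket{0}$-$X_{\pm\star}$ --- each $O(d^3)$ $\mathcal{G}$-gates by Lemma~\ref{lem:2toffoli_odd} --- plus $O(k)$ two-qudit gates: so $O(kd^3)$ $\mathcal{G}$-gates and $k-2$ garbage relays. Appending the reverse of all but the two outermost gates restores the relays (as in Lemma~\ref{lem:X+1_1}), turning them into borrowed ancilla with no asymptotic overhead.

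\emph{Stage 2: down to one borrowed ancilla.} Here I would follow the divide-and-conquer of Fig.~\ref{fig:X+1_1_even}: split the controls into halves $L=\boldsymbol{x}_{1:m}$ and $R=\boldsymbol{x}_{m+1:k-1}$ with $m=\lceil(k-1)/2\rceil$, keeping the single ancilla $a$ as a relay. Using locality once more, $P_k$ factors (up to commutation) into (i) a half-size ``$P$-type'' gate on $(R;x_k)$ handling the case $R\neq0^{|R|}$ (decrement $x_k$ iff the last nonzero entry of $R$ is even), which borrows $L$ and $a$ as the ancilla for its own Stage-1 synthesis; and (ii) for the case $R=0^{|R|}$, a sandwich $W\,P_{m+1}(L;a)\,W^{-1}$ (which decrements the relay $a$ by $[g(L)=0]$ via a half-size $P_{m+1}$ and then, through the transfer gate $W=\ket{\star}\ket{0^{|R|}}$-$X_{+\star}$ on $(a;R;x_k)$, moves that change onto $x_k$ conditioned on $R=0^{|R|}$), followed by $P_{m+1}^\dagger(L;a)$ to restore $a$ --- here $P_{m+1},P_{m+1}^\dagger$ borrow $R$ and $x_k$, while $W$ borrows $L$. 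Each piece now involves only $\approx k/2$ qudits, and the idle half (plus $x_k$ or $a$) supplies enough borrowed ancilla for its Stage-1 synthesis, so the total size stays $O(kd^3)$. The main obstacle --- and, I expect, the technical heart of the proof, in line with the paper's remark that the odd-$d$ case is ``totally different'' --- is synthesizing the transfer gate $W$ within this ancilla budget for odd $d$: the even-$d$ construction of Fig.~\ref{fig:X+1_1_even} keeps a single-qudit \emph{parity flag} on $a$ and toggles it with the fixed-point-free involution $X_{eo}^e$, but no fixed-point-free involution of $[\underline{d}]$ exists when $d$ is odd, so the value $0$ must be handled separately and one works with the cyclic gates $X_{+\star}$ instead; making the two halves genuinely fit inside each other's freed qudits is the delicate part of the accounting.
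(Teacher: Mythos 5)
Your proposal follows the paper's proof essentially step for step: the same ``$\Lambda$''-shaped recursion (base case $P_2$, relay gates $\ket{\star}\ket{0}$-$X_{\pm\star}$ plus one $\ket{e}$-$X_{-1}$ per level, uncomputation to turn garbage into borrowed ancilla) for the $k-2$-ancilla version, and the same halving with a single relay ancilla, transfer sandwich $X_{-\star}\,P\,X_{+\star}\,P^{\dagger}$ and half-size $P$-type gate on $(R;x_k)$, exactly as in the paper's Fig.~\ref{fig:P_n_half}. The one point you flag as the ``main obstacle''---synthesizing the transfer gate $\ket{\star}\ket{0^{\lceil k/2\rceil-1}}$-$X_{\pm\star}$ within the ancilla budget---is not a real difficulty: the paper obtains it by taking the Lemma~\ref{lem:X+1_1} circuit for $\ket{0^{\lceil k/2\rceil}}$-$X_{+1}$ and replacing its apex $\ket{00}$-$X_{+1}$ gate with $\ket{\star}\ket{0}$-$X_{\pm\star}$, drawing the $\lceil k/2\rceil-2$ borrowed ancilla from the idle first half, which is exactly the accounting you conjectured would work.
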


\begin{proof}

\begin{figure}[tbp]
    \centering
    \footnotesize{
    \begin{equation*}
    \Qcircuit @C=0.5em @R=0.5em @!R{
        \lstick{\ket{x_1}}& \qw & \qw & \qw & \push{\text{\circled{$0$}}} \qw \qwx[1] & \push{\text{\circledb{$e$}}} \qw \qwx[1] & \qw & \qw & \qw & \qw \\
        \lstick{\ket{a_1}}& \qw & \qw & \push{\text{\circledr{$\star$}}} \qw \qwx[1] & \gate{X_{-1}} & \gate{X_{-1}} & \push{\text{\circledr{$\star$}}} \qw \qwx[1] & \qw & \qw & \qw \\
        \lstick{\ket{x_2}}& \qw & \qw & \push{\text{\circled{$0$}}} \qw \qwx[1] & \push{\text{\circledb{$e$}}} \qw \qwx[1] & \qw & \push{\text{\circled{$0$}}} \qw \qwx[1] & \qw & \qw & \qw  \\
        \lstick{\ket{a_2}}& \qw & \push{\text{\circledr{$\star$}}} \qw \qwx[1] & \gate{X_{-\star}} & \gate{X_{-1}} & \qw & \gate{X_{+\star}} & \push{\text{\circledr{$\star$}}} \qw \qwx[1] & \qw & \qw \\
        \lstick{\ket{x_3}}& \qw & \push{\text{\circled{$0$}}} \qw \qwx[1] & \push{\text{\circledb{$e$}}} \qw \qwx[1] & \qw & \qw & \qw & \push{\text{\circled{$0$}}} \qw \qwx[1] & \qw & \qw \\
        \lstick{\ket{a_3}}& \push{\text{\circledr{$\star$}}} \qw \qwx[1] & \gate{X_{-\star}} & \gate{X_{-1}} & \qw & \qw & \qw & \gate{X_{+\star}} & \push{\text{\circledr{$\star$}}} \qw \qwx[1] & \qw \\
        \lstick{\ket{x_4}}& \push{\text{\circled{$0$}}} \qw \qwx[1] & \push{\text{\circledb{$e$}}} \qw \qwx[1] & \qw & \qw & \qw & \qw & \qw & \push{\text{\circled{$0$}}} \qw \qwx[1] & \qw \\
        \lstick{\ket{x_5}}& \gate{X_{-\star}} & \gate{X_{-1}} & \qw & \qw & \qw & \qw & \qw & \gate{X_{+\star}} & \qw \gategroup{1}{3}{6}{8}{1.5em}{--}
        }
    \end{equation*}}
    \caption{Synthesis of $P_5$ gate with 3 garbage ancilla. $\ket{\boldsymbol{x}_{1:5}}$ are the input qudits, $\ket{\boldsymbol{a}_{1:3}}$ are garbage ancilla.}
    \label{fig:P_n_n-2}
\end{figure}
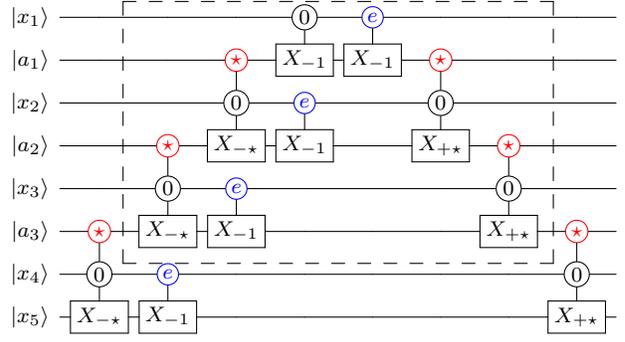
First, we show how to synthesize $P_k$ by using $O(k d^3)$ $\mathcal{G}$-gates and $k-2$ {\it garbage} ancilla. The structure of our synthesis is similar to the synthesis in Lemma~\ref{lem:X+1_1}, and Fig.~\ref{fig:P_n_n-2} is an illustration for $k=5$. The correctness is immediate from the following two observations: 
\begin{enumerate}
	\item The two gates at the top implement $P_2$, whose control is $\ket{x_1}$ and target is $\ket{a_1}$.
	\item Adding one $\ket{\star}\ket{0}$-$X_{-\star}$ and one $\ket{e}$-$X_{-1}$ to the left side of $P_{k-1}$ and one $\ket{\star}\ket{0}$-$X_{+\star}$ to the right side in the fashion as in Fig.~\ref{fig:P_n_n-2}, we get a $P_k$.
	
	For example, the circuit in the dashed box implements $P_{4}$ whose controls are $\ket{\boldsymbol{x}_{1:3}}$ and target is $\ket{a_3}$, then by adding the three gates in the bottom, we obtain the $P_5$.
\end{enumerate}
Besides, the circuit size can be easily seen to be $O(kd^3)$.

Second, the initial state of the $k-2$ garbage ancilla can be recovered by adding a component to the end of the circuit which reverses all the gates but the three at the bottom. So, the $k-2$ garbage ancilla can be replaced with borrowed ancilla while keeping the circuit size still $O(kd^3)$.

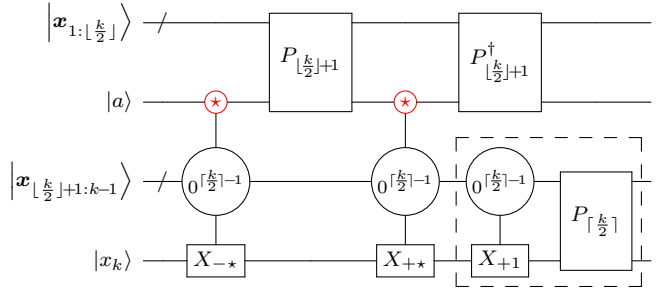
\begin{figure}[tbp]
    \footnotesize{
    \begin{equation*}
    \ \ \ \ \ \ \ \ \ \ \ \ \ \ \ \
    \Qcircuit @C=0.9em @R=0.5em @!R{
        \lstick{\ket{\boldsymbol{x}_{1:\lfloor \frac{k}{2}\rfloor}}}& {/} \qw & \qw & \multigate{1}{P_{\lfloor \! \frac{k}{2} \! \rfloor \! + \! 1}} & \qw & \multigate{1}{P_{\lfloor \! \frac{k}{2} \! \rfloor \! + \! 1}^\dagger} & \qw & \qw \\
        \lstick{\ket{a}}& \qw & \push{\text{\circledr{$\star$}}} \qw \qwx[1] & \ghost{P_{\lfloor \! \frac{k}{2} \! \rfloor \! + \!1}} & \push{\text{\circledr{$\star$}}} \qw \qwx[1] & \ghost{P_{\lfloor \! \frac{k}{2} \! \rfloor \! + \! 1}^\dagger} & \qw & \qw \\
        \lstick{\ket{\boldsymbol{x}_{\lfloor \frac{k}{2}\rfloor \! + \! 1:k \! - \! 1}}}& {/} \qw & \push{\scriptsize{\text{\circled{$0^{\tiny{\lceil \! \frac{k}{2} \! \rceil \! - \! 1}}$}}}} \qw \qwx[1] &\qw & \push{\scriptsize{\text{\circled{$0^{\tiny{\lceil \! \frac{k}{2} \! \rceil \! - \! 1}}$}}}} \qw \qwx[1] &  \push{\scriptsize{\text{\circled{$0^{\tiny{\lceil \! \frac{k}{2} \! \rceil \! - \! 1}}$}}}} \qw \qwx[1]  & \multigate{1}{P_{\lceil \frac{k}{2}\rceil}} & \qw \\
        \lstick{\ket{x_k}}& \qw & \gate{X_{-\star}} & \qw & \gate{X_{+\star}} & \gate{X_{ + 1}} &\ghost{P_{\lceil \frac{k}{2}\rceil}} & \qw \gategroup{3}{6}{4}{7}{0.9em}{--}
        }
    \end{equation*}}
    
    \caption{The synthesis of $P_k$ gate with one borrowed ancilla. $\ket{\boldsymbol{x}_{1:k}}$ are the input qudits, $\ket{a}$ is the ancilla.}
    \label{fig:P_n_half}
\end{figure}

    Finally, via the synthesis in Fig.~\ref{fig:P_n_half}, we can reduce the number of ancilla from $k-2$ to $1$ while keeping the circuit size to $O(kd^3)$. One can easily check the correctness case by case:
    \begin{itemize}
        \item When $\boldsymbol{x}_{\lfloor k/2\rfloor + 1:k - 1} \neq 0^{\tiny{\lceil k/2\rceil - 1}}$, $\ket{\star}\ket{0^{\lceil k/2 \rceil -1}}$-$X_{\pm \star}$ and $\ket{0^{\lceil k/2 \rceil -1}}$-$X_{+1}$ gates will not be activated, $P_{\lfloor k/2 \rfloor + 1}$ and $P_{\lfloor k/2 \rfloor + 1}^\dagger$ canceled with each other, thus the whole circuit is equivalent to the $P_{\lceil k/2 \rceil}$ gate. By the definition of $P_k$, $P_k$ equivalence to $P_{\lceil k/2 \rceil}$ if $\ket{\boldsymbol{x}_{\lfloor k/2\rfloor + 1:k - 1}} \neq 0^{\tiny{\lceil k/2 \rceil - 1}}$.
        \item When $\boldsymbol{x}_{\lfloor k/2\rfloor + 1:k - 1} = 0^{\tiny{\lceil k/2 \rceil - 1}}$, the gate in the dashed box canceled with each other, $\ket{\star}\ket{0^{\lceil k/2 \rceil -1}}$-$X_{\pm \star}$ bring what $P_{\lfloor \frac{k}{2} \rfloor + 1}$ do to ancilla $\ket{a}$ to target $\ket{t}$ (a similar technique is used in Fig.~\ref{fig:X+1_an}). By the definition of $P_k$, this exactly is what should do when $\ket{\boldsymbol{x}_{\lfloor k/2\rfloor + 1:k - 1}} = 0^{\tiny{\lceil k/2 \rceil - 1}}$.
    \end{itemize}
    
    To estimate the size of circuit in Fig.~\ref{fig:P_n_half}, the key observation is the $\ket{\star}\ket{0^{\lceil k/2 \rceil -1}}$-$X_{\pm \star}$ gate can be synthesized by slightly changing the circuit for $\ket{0^{\lceil k/2 \rceil}}$-$X_{+1}$ in Fig.~\ref{fig:X+1_an}, where we replace the top $\ket{00}$-$X_{+1}$ gate with $\ket{\star}\ket{0}$-$X_{\pm \star}$ gate. Thus the circuit size is also $O(kd^3)$. In addition, by reversing the synthesis of $P_k$, we get the desired synthesis of $P_k^\dagger$.

\end{proof}

\begin{theorem}
\label{the:X01}
For odd $d\geq 3$, the $k$-Toffoli gate can be synthesized by using $O(kd^3)$ $\mathcal{G}$-gates and no ancilla.
\end{theorem}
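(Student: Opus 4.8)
The plan is to exhibit the circuit of Fig.~\ref{fig:X_01} and verify that it realizes $\ket{0^k}$-$X_{01}$. The circuit acts on the $k$ control qudits $\ket{x_1},\dots,\ket{x_k}$ together with the target $\ket{t}$, and is assembled from three kinds of pieces: (i) the layer $\ket{0}$-$(X_{eo}^o)^{\otimes(k-1)}$ (controlled by $\ket{x_k}$, hitting $\ket{x_1},\dots,\ket{x_{k-1}}$), placed symmetrically at the two ends; (ii) the module $P_k$ of Lemma~\ref{lem:Pn} (controls $\ket{x_1},\dots,\ket{x_{k-1}}$, target $\ket{x_k}$) together with $P_k^\dagger$; and (iii) a constant number of two-qudit $\mathcal{G}$-type gates acting on $\ket{x_k}$ and $\ket{t}$, of the form $\ket{\ell}$-$X_{01}$ or $\ket{e}$-$X_{01}$. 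The guiding idea is the conjugation trick already used to build $P_k$: applying $P_k$ transports the $(k-1)$-fold predicate ``$x_1=\dots=x_{k-1}=0$'' onto the single qudit $\ket{x_k}$ — concretely, $P_k$ cyclically decrements $x_k$ exactly in that case — so the hard AND becomes a test on one qudit, after which $P_k^\dagger$ restores $x_k$. The $X_{eo}^o$-layer (which, since $d$ is odd, fixes $\ket 0$ and swaps odd with even on $\{1,\dots,d-1\}$) serves to normalize parities so that in the ``bad'' branches the conditional $X_{01}$'s on $\ket{t}$ fire an even number of times and cancel. Since $P_k$ needs only one borrowed ancilla (Lemma~\ref{lem:Pn}), and since $P_k$ leaves $\ket{t}$ untouched, I can let $\ket{t}$ itself play that role, so the whole construction is genuinely ancilla-free.

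With the circuit fixed, I would verify correctness by the same two-step scheme as in Lemmas~\ref{lem:2toffoli_odd} and~\ref{lem:Pn}. \emph{Step one:} the control register $(\ket{x_1},\dots,\ket{x_k})$ returns to its input. This holds because the $\mathcal{G}$-gates placed between $P_k$ and $P_k^\dagger$ act only on $\ket{x_k}$ and $\ket{t}$, touch $\ket{x_k}$ as a control only (so they do not change it), hence $P_k^\dagger$ inverts $P_k$ on $\ket{x_k}$; and because each $\ket{0}$-$X_{eo}^o$ is an involution, $\ket{x_1},\dots,\ket{x_{k-1}}$ are untouched between the two $X_{eo}^o$-layers, and $\ket{x_k}$ is back to its input value by the time the second layer runs, so that layer undoes the first. \emph{Step two:} $\ket{t}\mapsto X_{01}\ket{t}$ iff $x_1=\dots=x_k=0$. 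I would split into: (c1) some $x_i\neq0$ with $i<k$ — running the first $X_{eo}^o$-layer and then $P_k$ deposits into $\ket{x_k}$ either $x_k$ or $x_k-1\bmod d$ according to the parity of the last nonzero $x_i$, and the $X_{eo}^o$-layer is exactly what forces the two $\ket{\cdot}$-$X_{01}$ gates on $\ket{t}$ to agree or disagree so that $X_{01}$ is applied $0$ or $2$ times; (c2) $x_1=\dots=x_{k-1}=0$ but $x_k\neq0$ — here $P_k$ merely decrements $x_k$ to a value $\neq 0,d{-}1$, so the guards on the $X_{01}$'s on $\ket{t}$ are not (jointly) met and $\ket{t}$ is untouched or hit twice; (c3) the all-zero case — the decremented $x_k$ becomes $d{-}1$ with all controls still $0$, exactly one guard fires, and $\ket{t}$ gets a single $X_{01}$. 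Finally, for the size: the two $X_{eo}^o$-layers cost $O(kd^2)$ $\mathcal{G}$-gates, $P_k$ and $P_k^\dagger$ cost $O(kd^3)$ each by Lemma~\ref{lem:Pn}, and the $O(1)$ two-qudit $\mathcal{G}$-gates on $(\ket{x_k},\ket{t})$ cost $O(d)$ each, for $O(kd^3)$ in total.

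The hard part will be the case analysis of Step two, and in particular the parity bookkeeping of case (c1): one must check that for \emph{every} non-all-zero control string — classified by the position and parity of its last nonzero coordinate and by the value of $x_k$ — the number of $X_{01}$'s landing on $\ket{t}$ is even, and this is precisely where the choice of which of the $\mathcal{G}$-gates on $(\ket{x_k},\ket{t})$ are $\ket{0}$-controlled versus $\ket{e}$-controlled, together with the fact that $X_{eo}^o$ fixes $\ket 0$ and toggles parity on $\{1,\dots,d-1\}$ (which needs $d$ odd), has to be matched up exactly. A secondary point to pin down is that $\ket{t}$ legitimately serves as $P_k$'s borrowed ancilla at both the $P_k$ and $P_k^\dagger$ calls, i.e.\ that all $\mathcal{G}$-gates touching $\ket{t}$ lie outside those two sub-circuits, so that the final ancilla count is really zero — consistent with the parity-of-permutation obstruction of the remark after Theorem~\ref{the:X01_even} disappearing when $d$ is odd.
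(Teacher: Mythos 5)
Your proposal takes essentially the same route as the paper: the same circuit of Fig.~\ref{fig:X_01} built from the two $\ket{0}$-controlled $(X_{eo}^o)^{\otimes(k-1)}$ layers, the $P_k$/$P_k^\dagger$ modules of Lemma~\ref{lem:Pn}, and the $\ket{0}$-$X_{01}$ gates on $(\ket{x_k},\ket{t})$, verified by the same two-part argument (controls restored; parity count of how many $X_{01}$'s hit $\ket{t}$, classified by the last nonzero control and $x_k$) and the same $O(kd^3)$ gate count. The point you make explicit that the paper leaves implicit --- that $\ket{t}$ itself serves as the borrowed ancilla for the $P_k$ and $P_k^\dagger$ calls, which is legitimate since all gates touching $\ket{t}$ lie outside those sub-circuits --- is correct and indeed needed for the ancilla-free claim.
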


\begin{proof}

\begin{figure*}[tbp]
    \centering
    \footnotesize{
    \begin{equation*}
    \Qcircuit @C=1em @R=0.5em @!R{
        \lstick{\ket{\boldsymbol{x}_{1:k-1}}} & {/} \qw & \push{\scriptsize{\text{\circled{$0^{k \! - \! 1}$}}}} \qw \qwx[1] & \qw &&& {/} \qw & \qw & \multigate{1}{P_k} & \qw & \multigate{1}{P_k^{\dag}} & \gate{(X_{eo}^o)^{\otimes (k-1)}} & \multigate{1}{P_k} & \qw & \multigate{1}{P_k^{\dag}} & \gate{(X_{eo}^o)^{\otimes (k-1)}} & \qw \\
        \lstick{\ket{x_k}} & \qw & \push{\text{\circled{$0$}}} \qw \qwx[1] & \qw &\push{=}&& \qw & \push{\text{\circled{$0$}}} \qw \qwx[1] & \ghost{P_k} & \push{\text{\circled{$0$}}} \qw \qwx[1] & \ghost{P_k^{\dag}} & \push{\text{\circled{$0$}}} \qw \qwx[-1] & \ghost{P_k} & \push{\text{\circled{$0$}}} \qw \qwx[1] & \ghost{P_k^{\dag}} & \push{\text{\circled{$0$}}} \qw \qwx[-1] & \qw\\
        \lstick{\ket{t}} & \qw & \gate{X_{01}} & \qw &&& \qw & \gate{X_{01}} & \qw & \gate{X_{01}} & \qw & \qw & \qw & \gate{X_{01}} & \qw & \qw & \qw 
        }
    \end{equation*}}
    \caption{The synthesis of $\ket{0^k}$-$X_{01}$ gate. $\ket{\boldsymbol{x}_{1:k}}$ are the control qudits, $\ket{t}$ is the target qudit.}
    \label{fig:X_01}
\end{figure*}

The synthesis is presented in Fig.~\ref{fig:X_01}. By Lemma \ref{lem:Pn}, the circuit size can be easily seen to be $O(kd^3)$. To verify the correctness, we will show that: after implementing the circuit, (i) the controls $\ket{\boldsymbol{x}_{1:k}}$ remain unchanged; (ii) the target $\ket{t}$ becomes $X_{01}\ket{t}$ if $x_1=\cdots=x_{k}=0$ and unchanged otherwise.  

{\it Part (i)}. First, we can delete the three $\ket{0}$-$X_{01}$ gates at the bottom without changing the final state of the controls. Then, noting that the pair of $P_k$ and $P_k^\dagger$ cancels and the pair of two $\ket{0}$-$(X_{eo}^{o})^{\otimes (n-1)}$ gates cancels, we will reach an empty circuit. Now we can conclude that the controls are unchanged.  

{\it Part (ii)}. Given any $x_1\cdots x_{k}\in[\underline{d}]^k$, let $i^\ast\in[k-1]$ be the last index such that $x_{i^\ast}\neq 0$ and $x_{i}=0$ for any $i^\ast < i < k$. If there is no such $i^\ast$, i.e., $x_1\cdots x_{k-1}=0^{k-1}$, we let $i^\ast=\perp$. The definition of $i^\ast$ is the same as that in $P_k$. 

Only the three $\ket{0}$-$X_{01}$ gates at the bottom may change the state of the target $\ket{t}$. The first $\ket{0}$-$X_{01}$ gate fires if and only if $x_{k} = 0$. 

By the definition of $P_k$, the second $\ket{0}$-$X_{01}$ gate fires if and only if one of the following cases happens:
\begin{enumerate}
	\item $x_k = 0$ and  $i^\ast\neq\perp$ and $x_{i^\ast}$ is odd.
	\item $x_k = 1$ and $i^\ast\neq\perp$ and $x_{i^\ast}$ is even.
	\item $x_k=1$ and $i^\ast=\perp$. 
\end{enumerate}

A similar argument to Part (i) shows that the circuit before the first $\ket{0}$-$(X_{eo}^{o})^{\otimes (k-1)}$ gate does not change the state of the controls. Now, let us focus on the circuit after the first $P_k^\dagger$ gate. It is not hard to check that the third $\ket{0}$-$X_{01}$ gate fires if and only if one of the following cases happens:
\begin{enumerate}
	\item $x_k = 0$ and  $i^\ast\neq\perp$ and $x_{i^\ast}$ is even.
	\item $x_k = 1$ and $i^\ast\neq\perp$ and $x_{i^\ast}$ is even.
	\item $x_k=1$ and $i^\ast=\perp$. 
\end{enumerate}

Note that $\ket{t}$ becomes $X_{01}\ket{t}$ if and only if exactly one or exactly three $\ket{0}$-$X_{01}$ gates fire, which is easily seen to happen if and only if $x_k =  0$ and $i^\ast =\perp$, i.e., $x_1=\cdots=x_k=0$.
\end{proof}

\section{Applications}
\label{sec:app}
\subsection{Unitary synthesis on $d$-level qudits}
\label{sec:unitary}
Bullock et al. \cite{bullock2005asymptotically} showed that any unitary on $n$ $d$-level qudits can be exactly synthesized by $O(d^{2n})$ two-qudit gates. The two-qudit gate count has been shown to be optimal \cite{bullock2005asymptotically}. However, as their synthesis uses as many as $\lceil (n-2)/(d-2) \rceil$ clean ancilla, the ancilla count remains to be optimized.

Looking closer at their synthesis, the ancilla are only used in the synthesis of multi-controlled qudit gates. So, by substituting our improved synthesis of multi-controlled qudit gates which uses just one clean ancilla, we can reduce the number of clean ancilla from $\lceil (n-2)/(d-2) \rceil$ to just $1$ while keeping the two-qudit gate count still optimal. 

\begin{theorem}\label{thm:unitary_decompostion}
For any $d\geq 3$, any unitary acting on $n$ $d$-level qudits can be synthesized by using $O(d^{2n})$ two-qudit gates and one clean ancilla.
\end{theorem}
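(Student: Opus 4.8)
The plan is to follow the strategy of Bullock et al.~\cite{bullock2005asymptotically} essentially verbatim, substituting our improved synthesis of multi-controlled qudit gates at the one place where ancilla are consumed. First I would recall the skeleton of their decomposition: any unitary $U$ on $n$ $d$-level qudits (Hilbert-space dimension $d^n$) is written as a product of ``two-level'' unitaries, i.e.\ unitaries acting nontrivially only on a two-dimensional subspace $\mathrm{span}\{\ket{\vec{a}},\ket{\vec{b}}\}$ for two computational basis vectors $\vec{a},\vec{b}\in[\underline{d}]^n$. Standard Givens-rotation arguments give $O(d^{2n})$ such two-level factors, since that is the number of off-diagonal entries to zero out. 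Each two-level unitary is then conjugated by a permutation of the computational basis so that $\vec{a}$ and $\vec{b}$ become two basis states differing in a single coordinate; that coordinate's qudit plays the role of the target, and the remaining $n-1$ coordinates become controls fixed to prescribed values. Thus each two-level unitary is implemented by a constant number of multi-controlled single-qudit gates (of the form $\ket{\vec{c}}$-$V$ on $n-1$ controls) sandwiched between basis permutations, and each basis permutation costs $O(n)$ single-qudit and controlled gates. The total two-qudit gate count is $O(d^{2n}) \cdot O(n) = O(d^{2n})$ absorbing the polynomial-in-$n$ factor into the exponential, matching the claimed optimal bound.

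Next I would invoke our synthesis of $\ket{0^k}$-$U$. The key point, already noted in the excerpt, is that $\ket{0^k}$-$U$ reduces (Fig.~\ref{fig:CU_CnU}(b)) to the $k$-Toffoli gate plus $O(1)$ extra two-qudit gates; and by Theorem~\ref{the:X01_even} (even $d$) or Theorem~\ref{the:X01} (odd $d$), the $k$-Toffoli gate is synthesized with $O(kd^3)$ $\mathcal{G}$-gates using \emph{at most one borrowed ancilla} (zero when $d$ is odd). A borrowed ancilla is even weaker than a clean ancilla, so in particular one clean ancilla suffices for every multi-controlled gate arising in the decomposition of $U$, with $k \le n-1$, hence cost $O(nd^3) = O(n)$ per gate. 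Here is where the improvement over \cite{bullock2005asymptotically} enters: their synthesis of the same multi-controlled gates needed $\lceil (k-2)/(d-2)\rceil$ clean ancilla, whereas ours needs only one. I would then observe that a single clean ancilla can be reused across all $O(d^{2n})$ two-level factors: it is returned to $\ket{0}$ after each $k$-Toffoli (borrowed ancilla are restored by definition, and the clean ancilla feeding Fig.~\ref{fig:CU_CnU}(b) is restored because that gadget uncomputes it), so the same physical qudit serves every factor in sequence. Therefore the entire synthesis of $U$ uses exactly one clean ancilla.

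Putting the pieces together: $O(d^{2n})$ two-level unitaries, each implemented by $O(1)$ multi-controlled gates and $O(n)$ basis-permutation gates, each multi-controlled gate costing $O(nd^3)$ $\mathcal{G}$-gates via our Theorems, and all sharing one clean ancilla, yields a synthesis of $U$ with $O(d^{2n})\cdot O(nd^3) = O(d^{2n})$ two-qudit gates (treating $d$ as constant and absorbing the $\mathrm{poly}(d)$ and the $O(n)$ into the big-$O$) and one clean ancilla. The matching lower bound $\Omega(d^{2n})$ is already established in \cite{bullock2005asymptotically} by a dimension-counting argument on the manifold of unitaries, so the gate count is optimal.

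The main obstacle I anticipate is bookkeeping rather than conceptual: one must verify that \cite{bullock2005asymptotically}'s decomposition genuinely isolates all ancilla usage inside the multi-controlled-gate subroutine, so that no other part of their construction secretly needs extra workspace, and that the one borrowed ancilla demanded (for even $d$) does not clash with the clean ancilla already present — but since a clean ancilla trivially doubles as a borrowed one when not otherwise in use, and the basis permutations and Givens rotations are manifestly ancilla-free, this is routine. A minor subtlety worth a sentence is handling the case $d$ even versus $d$ odd uniformly: for odd $d$ the $k$-Toffoli is ancilla-free so the ``one clean ancilla'' is only needed for the generic $\ket{0^k}$-$V$ reduction, while for even $d$ the borrowed ancilla of Theorem~\ref{the:X01_even} is supplied by (or alongside) that same clean ancilla; either way the count is one.
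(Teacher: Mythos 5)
Your high-level strategy --- reuse the construction of \cite{bullock2005asymptotically} and swap in the improved synthesis of multi-controlled gates, with a single clean ancilla (doubling as the borrowed ancilla when $d$ is even) reused across all factors --- is exactly the paper's argument, and the ancilla bookkeeping in your proposal is fine. The genuine gap is in your gate count. What you actually describe is the naive two-level (Givens-rotation) decomposition: $O(d^{2n})$ two-level factors, each implemented by $\Theta(n)$ two-qudit gates (Gray-code-style basis permutations plus a linear-size multi-controlled gate). That gives $O(n\,d^{2n})$, and your step ``$O(d^{2n})\cdot O(n)=O(d^{2n})$ by absorbing the polynomial-in-$n$ factor into the exponential'' is false: $n\,d^{2n}$ is not $O(d^{2n})$ for growing $n$; the paper only hides $\mathrm{poly}(d)$ factors in the big-$O$, never factors that grow with $n$. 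Since the theorem asserts the \emph{optimal} count $O(d^{2n})$ (matching the $\Omega(d^{2n})$ lower bound), losing this factor of $n$ means your reconstruction does not prove the statement as claimed.

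To close the gap you must rely on the actual circuit of Bullock et al., which is more structured than a plain two-level decomposition and already achieves $O(d^{2n})$ two-qudit gates, with its $\lceil (n-2)/(d-2)\rceil$ clean ancilla appearing solely inside the synthesis of multi-controlled qudit gates. The paper's proof is precisely this observation: the new synthesis of $\ket{0^k}$-$U$ (Fig.~\ref{fig:CU_CnU}(b) combined with Theorems~\ref{the:X01_even} and \ref{the:X01}) has the same $O(k)$ gate count as the many-ancilla synthesis it replaces, so substituting it gate-for-gate keeps the total at $O(d^{2n})$ while cutting the ancilla count to one clean ancilla. In other words, the work is not to re-derive the decomposition (your version sacrifices optimality) but to check that all ancilla usage in \cite{bullock2005asymptotically} is confined to the multi-controlled-gate subroutine and that the replacement preserves the asymptotic size; your final paragraph gestures at this check, but the flawed counting upstream still leaves the stated bound unproved.
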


\subsection{Implementation of Classical Reversible Functions}
\label{sec:reversible}
In this subsection, we show how to implement any classical reversible function $f:[\underline{d}]^n\rightarrow [\underline{d}]^n$ on $n$ $d$-level qudits by using $O(nd^n)$ $\mathcal{G}$-gates and one borrowed ancilla (Theorem \ref{thm:classical_ub}). Moreover, for odd $d\geq 3$, this borrowed ancilla can even be saved. In addition, the $\mathcal{G}$-gate count in our implementation can be shown to be optimal up to a $\log n$ factor (Lemma \ref{the:rlower}).

In particular, for $d=3$, each $\mathcal{G}$-gate can be further exactly synthesized by using a constant number of Clifford+T gates \cite{yeh2022constructing}. So Theorem \ref{thm:classical_ub} directly leads to an ancilla-free implementation of ternary classical reversible functions by using $O(n3^n)$ Clifford+T gates, which improves Yeh and van de Wetering's result  \cite{yeh2022constructing}. 

The proof of Theorem \ref{thm:classical_ub} is a natural generalization of that to Theorem 2 in \cite{yeh2022constructing}, except that our improved synthesis of the multi-controlled Toffoli substitutes for that used in \cite{yeh2022constructing}.
\begin{theorem}\label{thm:classical_ub}
Any $n$-variable $d$-ary classical reversible function can be exactly implemented by using $O(n d^{n})$ $\mathcal{G}$-gates. This implementation is ancilla-free when $d$ is odd, and uses one borrowed ancilla when $d$ is even.
\end{theorem}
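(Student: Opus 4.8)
The plan is to reduce the synthesis of an arbitrary classical reversible function $f:[\underline{d}]^n\to[\underline{d}]^n$ to a composition of multi-controlled basic permutations, then invoke Theorem~\ref{the:X01_even} and Theorem~\ref{the:X01}. First I would recall the standard fact (used already in \cite{yeh2022constructing} and classically for qubits) that any permutation of $[\underline{d}]^n$ can be written as a product of transpositions, and moreover that any transposition of two basis strings can be realized using a sequence of ``controlled cycle'' gates: pick a coordinate on which the two strings differ, use multi-controlled $X_{+y}$ / $X_{ij}$ gates conditioned on the other $n-1$ coordinates to move one string onto the other, apply the swap, and undo. Since each $\ket{0^{n-1}}$-controlled single-qudit permutation decomposes into $O(d)$ $k$-Toffoli-type gates (conjugating by single-qudit $\mathcal{G}$-gates to handle arbitrary control patterns $\ket{\vec{c}}$ rather than $\ket{0^{n-1}}$, and arbitrary target permutations rather than $X_{01}$), and by Theorems~\ref{the:X01_even}/\ref{the:X01} each such gate costs $O(n d^3)=O(n)$ $\mathcal{G}$-gates (with no ancilla when $d$ is odd, one borrowed ancilla when $d$ is even), a naive count over all $d^n$ strings would give $O(n d^{2n})$, which is too large by a factor of $d^n$.

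The key step, therefore — and the one I expect to be the main obstacle — is the counting/packing argument that brings the bound down to $O(n d^n)$. Here I would follow the strategy of \cite{yeh2022constructing}: rather than fixing each string one at a time, process the truth table coordinate-block by coordinate-block, or equivalently observe that a well-chosen ordering lets many transpositions share the same control structure and be batched. Concretely, one writes $f$ as a product of at most $d^n$ transpositions arranged so that we sweep through the output values in a fixed order; for each fixed ``prefix'' of control coordinates there are only $O(d)$ distinct single-qudit permutations to realize on the remaining coordinate, and the multi-controlled gates implementing them across all $d^{n-1}$ prefixes of length $n-1$ can be organized into $O(d^n)$ total multi-controlled gates, each of which expands to $O(n)$ $\mathcal{G}$-gates. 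Summing gives $O(n d^n)$. I would present this as a lemma: ``any $f:[\underline{d}]^n\to[\underline{d}]^n$ is a product of $O(d^n)$ gates each of the form $\ket{\vec c}$-$V$ with $\vec c\in[\underline{d}]^{n-1}$ and $V$ a single-qudit $\mathcal{G}$-gate,'' and then the theorem follows immediately by substituting the $O(n)$-size synthesis of each such gate.

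For the ancilla bookkeeping, I would note that all the borrowed-ancilla syntheses return the ancilla to its initial state, so a single borrowed ancilla suffices for the whole circuit when $d$ is even (reuse it across all multi-controlled gates), and no ancilla is needed when $d$ is odd by Theorem~\ref{the:X01}; when $n-1\le 2$ the base cases are handled directly since $2$-Toffoli needs no ancilla for odd $d$ (Lemma~\ref{lem:2toffoli_odd}) and one borrowed ancilla for even $d$ (Lemma~\ref{lem:2toffoli_even}), and for $n=1$ every reversible function is a single-qudit permutation realizable by $O(d)$ $\mathcal{G}$-gates. Finally, for the $d=3$ fault-tolerance remark I would simply cite \cite{yeh2022constructing} for the fact that each $\mathcal{G}$-gate — i.e.\ $X_{ij}$ and $\ket{0}$-$X_{01}$ on qutrits — is exactly synthesizable by $O(1)$ Clifford$+T$ gates, so the whole construction stays in the Clifford$+T$ gate set with gate count $O(n3^n)$.
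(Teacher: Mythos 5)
Your outline (decompose $f$ into $2$-cycles, implement each with the multi-controlled Toffoli syntheses of Theorems~\ref{the:X01_even}/\ref{the:X01}, reuse a single borrowed ancilla since it is always restored) matches the paper, but the quantitative heart of your argument has a genuine gap, and it starts with a miscount. A single $2$-cycle $(\boldsymbol{a},\boldsymbol{b})$ does not force anything like the $O(nd^{2n})$ total you attribute to the ``naive'' count: pick a coordinate where $\boldsymbol{a}$ and $\boldsymbol{b}$ differ (say the last, w.l.o.g.\ $a_n\neq b_n$) and conjugate one $(n-1)$-controlled $X_{a_nb_n}$ gate (control pattern $\boldsymbol{a}_{1:n-1}$) by $n-1$ \emph{singly}-controlled gates $\ket{b_n}$-$X_{a_ib_i}$ acting on the other coordinates (Fig.~\ref{fig:2cycle}). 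Only the middle gate needs the $O(n)$-size synthesis; each conjugating gate costs $O(1)$ $\mathcal{G}$-gates since $d$ is a constant. Hence each transposition costs $O(n)$ $\mathcal{G}$-gates, and multiplying by the $d^n-1$ transpositions already gives $O(nd^n)$ with no batching whatsoever --- this is exactly the paper's proof (and also how the $O(3^n n^{3.585})$ count of \cite{yeh2022constructing} arises: one heavy multi-controlled gate per transposition). Note that the crucial point is that the ``move'' gates are controlled only on the single coordinate where $\boldsymbol{a}$ and $\boldsymbol{b}$ differ, not on the other $n-1$ coordinates as in your sketch; even your version would give $O(n^2d^n)$, not $O(nd^{2n})$.

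Because you believed the naive count fails, your proof instead hinges on the lemma that every permutation of $[\underline{d}]^n$ is a product of $O(d^n)$ gates of the form $\ket{\boldsymbol{c}}$-$V$ with $\boldsymbol{c}\in[\underline{d}]^{n-1}$, and this lemma is asserted rather than proved. The justification offered --- ``for each fixed prefix there are only $O(d)$ distinct single-qudit permutations to realize on the remaining coordinate'' --- is not a decomposition of a general $f$: a generic permutation of $[\underline{d}]^n$ does not act blockwise within the fibers $\{\boldsymbol{c}\}\times[\underline{d}]$, and rewriting it as such a product requires a genuine routing argument (a Hall's-theorem/Clos-network style recursion over coordinates), which you neither state nor carry out, and whose straightforward recursive form does not obviously yield the claimed $O(d^n)$ gate count. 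So the load-bearing step of your write-up is missing; fortunately it is also unnecessary once the per-transposition cost is counted correctly as above. Your ancilla bookkeeping (one reusable borrowed ancilla for even $d$, none for odd $d$) and the $d=3$ Clifford$+T$ remark are fine.
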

\begin{proof}
Fix a $n$-variable $d$-ary classical reversible function $f:[\underline{d}]^n \rightarrow  [\underline{d}]^n$. We view $f$ as a permutation on $[\underline{d}]^n$, which can be expressed as product of $(d^n - 1)$ 2-cycles \cite{dixon1996permutation}. We will show that each such 2-cycle can be implemented by using $O(n)$ $\mathcal{G}$-gates, and using no ancilla when $d$ is odd and using one borrowed ancilla when $d$ is even. Then the conclusion is immediate.

In the following, we show how to implement a $2$-cycle. Let $\ket{\boldsymbol{a}}=\ket{a_1\cdots a_n}$ and $\ket{\boldsymbol{b}}=\ket{b_1\cdots b_n}$ be two distinct computational basis of the $n$ $d$-level qudit system. W.l.o.g., we assume $a_n\neq b_n$. Suppose the $2$-cycle swaps $\ket{\boldsymbol{a}}$ and $\ket{\boldsymbol{b}}$ and acts identity on other computational basis states. The implementation of this $2$-cycle is depicted in Fig.~\ref{fig:2cycle}. Here, Step 1 is the composition of $(n-1)$ $\ket{b_n}$-$X_{ij}$ gates whose controls are all $\ket{x_n}$ and targets are $\ket{x_1},\cdots,\ket{x_{n-1}}$ respectively. Step 2 is a multi-controlled $X_{a_nb_n}$ gate, which fires if $\vec{x_{1:n-1}}=\vec{a_{1:n-1}}$. Step 3 is the same as Step 1. By Theorems \ref{the:X01_even} and \ref{the:X01}, this circuit can be synthesized by using $O(n)$ $\mathcal{G}$-gates and no ancilla when $d$ is odd and one borrowed ancilla when $d$ is even. 
To verify the correctness, {one can easily check that:}
\begin{itemize}
\item When $x_n \notin \{a_n,b_n\}$, {no gates are activated.}
\item When $\ket{\boldsymbol{x}}=\ket{\boldsymbol{b}}$, the gates on the left and middle are activated, resulting in a transition from state $\ket{\boldsymbol{b}}$ to $\ket{\boldsymbol{a}}$.
\item When $x_n=b_n$ and $\boldsymbol{x}_{1:n-1} \neq \boldsymbol{b}_{1:n-1}$, the gates on the left and right are activated and cancel each other out.
\item When $\ket{\boldsymbol{x}}=\ket{\boldsymbol{a}}$, the gates on the right and middle are activated, resulting in a transition from state $\ket{\boldsymbol{a}}$ to $\ket{\boldsymbol{b}}$.
\item When $x_n=a_n$ and $\boldsymbol{x}_{1:n-1} \neq \boldsymbol{a}_{1:n-1}$, no gates are activated.
\end{itemize}
Thus, the circuit in Fig.~\ref{fig:2cycle} indeed implements the function of 2-cycle, and the size of the circuit can be directly determined from Theorem.\ref{the:X01}.
\end{proof}
 
\begin{figure}[tbp]
    \centering
    \footnotesize{
    \begin{equation*}
        \Qcircuit @C=1em @R=0.5em  {
        \lstick{\ket{x_1}} & \gate{X_{a_1 b_1}} \qwx[1] & \push{\text{\circledr{$a_1$}}} \qw \qwx[1] & \gate{X_{a_1 b_1}} \qwx[1] & \qw \\
        \lstick{\vdots}&\push{\vdots}&\push{\vdots}&\push{\vdots}& \\
        &&&&\\
        \lstick{\ket{x_{n-1}}} & \gate{X_{a_{n\! - \! 1} b_{n \! - \! 1}}} \qwx[-1] & \push{\text{\circledr{$a_{n\! - \! 1}$}}} \qw \qwx[1] \qwx[-1] & \gate{X_{a_{n\! - \! 1} b_{n \! - \! 1}}} \qwx[-1] & \qw \\
        \lstick{\ket{x_n}} & \push{\text{\circledr{$b_{n}$}}} \qw \qwx[-1] & \gate{X_{a_n b_n}} & \push{\text{\circledr{$b_{n}$}}} \qw \qwx[-1] & \qw \\
        &\dstick{\mbox{Step 1}}&\dstick{\mbox{Step 2}}&\dstick{\mbox{Step 3}}& 
        }
    \end{equation*}}
    \caption{Circuit that implement the $2$-cycle $(\boldsymbol{a},\boldsymbol{b})$.}
    \label{fig:2cycle}
\end{figure}
Moreover, the following lemma shows that the $\mathcal{G}$-gate count is optimal (up to a logarithmic factor). The proof is similar to that of Proposition 3 in \cite{yeh2022constructing}.
\begin{lemma}
\label{the:rlower}
For $d\geq 2$, there exist $n$-variable $d$-ary classical reversible functions $f$ that require $\Omega(nd^n / \log n)$ $\mathcal{G}$-gates to synthesize, if only $O(n)$ ancilla are available.
\end{lemma}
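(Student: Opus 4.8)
The plan is to use a counting (dimension) argument: bound the number of distinct classical reversible functions realizable by short circuits from the gate set $\mathcal{G}$, and compare with the total number $((d^n)!)$ of $n$-variable $d$-ary reversible functions. First I would fix the resource budget: suppose we have $n+O(n)=O(n)$ wires in total (the $n$ logical qudits plus $O(n)$ ancilla, of any type), and suppose every target function can be realized with at most $s$ $\mathcal{G}$-gates. Each $\mathcal{G}$-gate is specified by its type (one of $\ket{0}$-$X_{01}$ or $X_{ij}$, i.e. $O(d^2)$ choices, a constant since $d$ is fixed) together with the wire(s) it acts on: at most two wires out of $O(n)$, giving $O(n^2)$ placements. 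Hence a circuit of $\le s$ gates is described by at most $\left(O(n^2)\right)^{s} = 2^{O(s\log n)}$ bit-strings, so at most $2^{O(s\log n)}$ distinct unitaries — and in particular at most that many distinct reversible functions on the $n$ logical qudits — are reachable.

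Next I would lower-bound the number of functions that must be represented. Since an adversary may choose $f$ among all $(d^n)!$ permutations of $[\underline{d}]^n$, and since the ancilla are clean/burnable/borrowed/garbage the relevant thing is only the induced action on the logical register (for clean or borrowed ancilla this action is exactly $f$; for garbage/burnable one has to be slightly more careful, but one can just restrict attention to clean ancilla, which only makes the claim stronger), we need $2^{O(s\log n)} \ge (d^n)!$. Taking logarithms and using Stirling, $\log\big((d^n)!\big) = \Theta\big(d^n \log(d^n)\big) = \Theta\big(n d^n \log d\big) = \Theta(n d^n)$ since $d$ is constant. Therefore $s\log n = \Omega(n d^n)$, i.e. $s = \Omega\!\left(n d^n/\log n\right)$, which is the claimed bound; a counting/union argument then produces an explicit $f$ (in fact almost all $f$) meeting it.

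The one place that needs care — and the main obstacle — is making the ``induced action'' step fully rigorous when the available ancilla are of the weaker types (garbage or burnable), since then two circuits realizing the same logical $f$ may differ on ancilla and a single circuit need not even induce a well-defined permutation on the logical register across all ancilla initializations. The clean way around this, which I would adopt, is to prove the bound for the most generous ancilla model (clean ancilla, which a lower bound for garbage/burnable/borrowed models subsumes by the reduction that a clean ancilla can simulate any of the others), so that each admissible circuit induces a genuine permutation of $[\underline{d}]^n$ and the counting goes through verbatim. The remaining steps — the $O(n^2)$ gate-placement count and the Stirling estimate — are routine, exactly as in the proof of Proposition~3 of \cite{yeh2022constructing}, and require no new idea.
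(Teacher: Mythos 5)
Your proposal is correct and follows essentially the same route as the paper: count circuits of size $s$ over $O(n)$ wires as $2^{O(s\log n)}$, compare with $\log\bigl((d^n)!\bigr)=\Theta(nd^n)$, and conclude $s=\Omega(nd^n/\log n)$, exactly as in the paper's proof (which fixes $(c-1)n$ ancilla and bounds the circuit count by $(cdn)^{2N}$). Your extra discussion of the ancilla model is harmless but unnecessary, since in any of the four models each circuit implements at most one reversible function, so the counting goes through directly.
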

\begin{proof}
Assume the number of ancilla is $(c-1)n$ for a constant $c$. The number of different $\mathcal{G}$-gates, considering the different positions of gates, is $cn(cn-1)+cnd(d-1)/2$. Additionally, the number of different circuits with $N$ gates is upper-bounded by $(cn(cn-1)+cnd(d-1)/2)^N \leq (cdn)^{2N}$. Note that the number of $n$-variable $d$-ary classical reversible functions $f$ is $(d^n)!$. Therefore, the number of gates N must be large enough such that $(cdn)^{2N} \geq (d^n)!$, which implies that $N \geq \frac{nd^n\log d}{4\log (cdn)}$. Since $c$ and $d$ are constant, we have $N=\Omega(nd^n / \log n)$.
\end{proof}

\section{Conclusion}
\label{sec:con}
We present a linear-size synthesis scheme for any multi-controlled gate on $d$-level qudits by using just one clean ancilla. This new synthesis leads to an optimal-size and one-clean-ancilla synthesis of unitaries on qudits, and a near-optimal-size and ancilla-free/one-borrowed-ancilla implementation of classical reversible functions as qudit gates.

\bibliographystyle{ieeetr}
\bibliography{reference}

\begin{thebibliography}{10}

\bibitem{preskill2018quantum}
J.~Preskill, ``Quantum computing in the nisq era and beyond,'' {\em Quantum},
  vol.~2, p.~79, 2018.

\bibitem{shende2005synthesis}
V.~V. Shende, S.~S. Bullock, and I.~L. Markov, ``Synthesis of quantum logic
  circuits,'' in {\em Proceedings of the 2005 Asia and South Pacific Design
  Automation Conference}, pp.~272--275, 2005.

\bibitem{hu2019efficient}
S.~Hu, D.~Maslov, M.~Pistoia, and J.~Gambetta, ``Efficient circuits for quantum
  search over 2d square lattice architecture,'' in {\em Proceedings of the 56th
  Annual Design Automation Conference 2019}, pp.~1--2, 2019.

\bibitem{increamenter1}
P.~Gokhale, J.~M. Baker, C.~Duckering, N.~C. Brown, K.~R. Brown, and F.~T.
  Chong, ``Asymptotic improvements to quantum circuits via qutrits,'' in {\em
  Proceedings of the 46th International Symposium on Computer Architecture,
  {ISCA} 2019, Phoenix, AZ, USA, June 22-26, 2019} (S.~B. Manne, H.~C. Hunter,
  and E.~R. Altman, eds.), pp.~554--566, {ACM}, 2019.

\bibitem{bullock2005asymptotically}
S.~S. Bullock, D.~P. O’Leary, and G.~K. Brennen, ``Asymptotically optimal
  quantum circuits for d-level systems,'' {\em Physical review letters},
  vol.~94, no.~23, p.~230502, 2005.

\bibitem{soeken2016unlocking}
M.~Soeken and A.~Chattopadhyay, ``Unlocking efficiency and scalability of
  reversible logic synthesis using conventional logic synthesis,'' in {\em 2016
  53nd ACM/EDAC/IEEE Design Automation Conference (DAC)}, pp.~1--6, IEEE, 2016.

\bibitem{bhattacharjee2019muqut}
D.~Bhattacharjee, A.~A. Saki, M.~Alam, A.~Chattopadhyay, and S.~Ghosh, ``Muqut:
  Multi-constraint quantum circuit mapping on nisq computers,'' in {\em
  International Conference on Computer-Aided Design (ICCAD)}, pp.~1--7, IEEE,
  2019.

\bibitem{chu2023scalable}
J.~Chu, X.~He, Y.~Zhou, J.~Yuan, L.~Zhang, Q.~Guo, Y.~Hai, Z.~Han, C.-K. Hu,
  W.~Huang, {\em et~al.}, ``Scalable algorithm simplification using quantum and
  logic,'' {\em Nature Physics}, vol.~19, no.~1, pp.~126--131, 2023.

\bibitem{sun2021asymptotically}
X.~Sun, G.~Tian, S.~Yang, P.~Yuan, and S.~Zhang, ``Asymptotically optimal
  circuit depth for quantum state preparation and general unitary synthesis,''
  {\em arXiv preprint arXiv:2108.06150}, 2021.

\bibitem{lu2020quantum}
H.-H. Lu, Z.~Hu, M.~S. Alshaykh, A.~J. Moore, Y.~Wang, P.~Imany, A.~M. Weiner,
  and S.~Kais, ``Quantum phase estimation with time-frequency qudits in a
  single photon,'' {\em Advanced Quantum Technologies}, vol.~3, no.~2,
  p.~1900074, 2020.

\bibitem{chi2022programmable}
Y.~Chi, J.~Huang, Z.~Zhang, J.~Mao, Z.~Zhou, X.~Chen, C.~Zhai, J.~Bao, T.~Dai,
  H.~Yuan, {\em et~al.}, ``A programmable qudit-based quantum processor,'' {\em
  Nature communications}, vol.~13, no.~1, pp.~1--10, 2022.

\bibitem{klimov2003qutrit}
A.~Klimov, R.~Guzm{\'a}n, J.~Retamal, and C.~Saavedra, ``Qutrit quantum
  computer with trapped ions,'' {\em Physical Review A}, vol.~67, no.~6,
  p.~062313, 2003.

\bibitem{ringbauer2022universal}
M.~Ringbauer, M.~Meth, L.~Postler, R.~Stricker, R.~Blatt, P.~Schindler, and
  T.~Monz, ``A universal qudit quantum processor with trapped ions,'' {\em
  Nature Physics}, vol.~18, no.~9, pp.~1053--1057, 2022.

\bibitem{blok2021quantum}
M.~S. Blok, V.~V. Ramasesh, T.~Schuster, K.~O’Brien, J.-M. Kreikebaum,
  D.~Dahlen, A.~Morvan, B.~Yoshida, N.~Y. Yao, and I.~Siddiqi, ``Quantum
  information scrambling on a superconducting qutrit processor,'' {\em Physical
  Review X}, vol.~11, no.~2, p.~021010, 2021.

\bibitem{yurtalan2020implementation}
M.~A. Yurtalan, J.~Shi, M.~Kononenko, A.~Lupascu, and S.~Ashhab,
  ``Implementation of a walsh-hadamard gate in a superconducting qutrit,'' {\em
  Physical review letters}, vol.~125, no.~18, p.~180504, 2020.

\bibitem{campbell2014enhanced}
E.~T. Campbell, ``Enhanced fault-tolerant quantum computing in d-level
  systems,'' {\em Physical review letters}, vol.~113, no.~23, p.~230501, 2014.

\bibitem{wang2020qudits}
Y.~Wang, Z.~Hu, B.~C. Sanders, and S.~Kais, ``Qudits and high-dimensional
  quantum computing,'' {\em Frontiers in Physics}, vol.~8, p.~589504, 2020.

\bibitem{bruzewicz2019trapped}
C.~D. Bruzewicz, J.~Chiaverini, R.~McConnell, and J.~M. Sage, ``Trapped-ion
  quantum computing: Progress and challenges,'' {\em Applied Physics Reviews},
  vol.~6, no.~2, p.~021314, 2019.

\bibitem{toffoli-sy0}
F.~S. Khan and M.~Perkowski, ``Synthesis of multi-qudit hybrid and d-valued
  quantum logic circuits by decomposition,'' {\em Theoretical Computer
  Science}, vol.~367, no.~3, pp.~336--346, 2006.

\bibitem{toffoli-sy1}
Y.-M. Di and H.-R. Wei, ``Synthesis of multivalued quantum logic circuits by
  elementary gates,'' {\em Physical Review A}, vol.~87, p.~012325, 2013.

\bibitem{saha2022asymptotically}
A.~Saha, R.~Majumdar, D.~Saha, A.~Chakrabarti, and S.~Sur-Kolay,
  ``Asymptotically improved circuit for a d-ary grover's algorithm with
  advanced decomposition of the n-qudit toffoli gate,'' {\em Physical Review
  A}, vol.~105, no.~6, p.~062453, 2022.

\bibitem{adder1}
M.~Haghparast, R.~Wille, and A.~T. Monfared, ``Towards quantum reversible
  ternary coded decimal adder,'' {\em Quantum Inf. Process.}, vol.~16, no.~11,
  p.~284, 2017.

\bibitem{adder2}
M.~H.~A. Khan and M.~A. Perkowski, ``Quantum ternary parallel adder/subtractor
  with partially-look-ahead carry,'' {\em J. Syst. Archit.}, vol.~53, no.~7,
  pp.~453--464, 2007.

\bibitem{yeh2022constructing}
L.~Yeh and J.~van~de Wetering, ``Constructing all qutrit controlled clifford+t
  gates in clifford+t,'' in {\em Reversible Computation - 14th International
  Conference, {RC} 2022, Urbino, Italy, July 5-6, 2022, Proceedings} (C.~A.
  Mezzina and K.~Podlaski, eds.), vol.~13354 of {\em Lecture Notes in Computer
  Science}, pp.~28--50, Springer, 2022.

\bibitem{moraga2016quantum}
C.~Moraga, ``Quantum p-valued toffoli and deutsch gates with conjunctive or
  disjunctive mixed polarity control,'' in {\em 2016 IEEE 46th International
  Symposium on Multiple-Valued Logic (ISMVL)}, pp.~241--246, IEEE, 2016.

\bibitem{dixon1996permutation}
J.~D. Dixon and B.~Mortimer, {\em Permutation groups}, vol.~163.
\newblock Springer Science \& Business Media, 1996.

\bibitem{barenco1995elementary}
A.~Barenco, C.~H. Bennett, R.~Cleve, D.~P. DiVincenzo, N.~Margolus, P.~Shor,
  T.~Sleator, J.~A. Smolin, and H.~Weinfurter, ``Elementary gates for quantum
  computation,'' {\em Physical review A}, vol.~52, no.~5, p.~3457, 1995.

\end{thebibliography}

\end{document}